\begin{document}

\mainmatter  

\title{On Wireless Scheduling Using the Mean Power Assignment}
\author{Tigran Tonoyan  \thanks {Research partially founded by FRONTS 215270.}}

\authorrunning{Tigran Tonoyan} 
\institute{TCS Sensor Lab\\
Centre Universitaire d'Informatique\\
 Route de Drize 7, 1227 Carouge, Geneva, Switzerland\\
 \mail\\
 \url{http://tcs.unige.ch}}

\maketitle              

\begin{abstract}
In this paper the problem of scheduling with power control in wireless networks is studied: given a set of communication requests, one needs to assign the powers of the network nodes, and schedule the transmissions so that they can be done in a minimum time, taking into account the signal interference of concurrently transmitting nodes. The signal interference is modeled by SINR constraints. Approximation algorithms are given for this problem, which use the \emph{mean power assignment}. The problem of schduling with fixed mean power assignment is also considered, and approximation guarantees are proven.
\end{abstract}

\section{Introduction}

One of the basic issues in wireless networks is that concurrent transmissions may cause interference.  This makes it necessary to schedule the transmission requests in the network, so that the interference of the concurrent transmissions on each transmission is small enough. On the other hand, the length of the schedule should be small, to make the delay caused by the interference as small as possible. We are interested in the problem of scheduling with power control, i.e. we choose the power levels of the nodes and then schedule the set of communication requests with respect to the chosen power settings.

The scheduling problem has been studied in several communication models. It has been shown that the results obtained in different models differ essentially.
One of the factors on which the scheduling problem crucially depends is the model of interference. Wireless networks have often been modeled as graphs. The nodes of this communication graph represent the physical devices, two nodes being connected by an edge if and only if the respective devices are within mutual transmission range. In this graph-theoretic model a node is assumed to receive a message correctly if and only if no other node in close physical proximity transmits at the same time. Clearly, the graph-theoretic model fails to capture the accumulative nature of actual radio signals. If the power levels of the nodes are chosen properly, then a node may successfully receive a message in spite of being in the transmission range of other simultaneous transmitters.

In contrast, during recent years there has been a significant amount of research done considering the problem of scheduling in models of wireless networks which are more realistic (and more efficient, see~\cite{moscibroda}) than graph-theoretic models.  The standard model is the \emph{SINR (signal-to-interference-plus-noise-ratio)} model. The SINR model reflects  physical reality more accurately and is therefore often simply called \emph{the physical model}.

More formally the problem of scheduling with power control (or simply \emph{PC-scheduling problem}) can be described as follows. Given is an arbitrary set of links, each a sender-receiver pair of wireless nodes. We seek an assignment of powers to the sender nodes and a partition of the link-set into a minimum number of subsets or \emph{slots}, so that the links in each slot satisfy the SINR-constraints w.r.t. the chosen power assignment. The problem is considered in two communication models: \emph{the directed model of communication}, where the communication between two nodes is one-directional (i.e. in one session only the sender node sends packets and the receiver just receives), and the \emph{bidirectional model of communication}, where both 
nodes in a link may be transmitting, which implies stronger constraints. 
We are trying to design algorithms that result in efficient schedules. 

As for the power assignments, we are particularly interested in schedules using so-called \emph{oblivious power assignments}, which depend only on the length of the given link. These power assignments are important in distributed networks, where each node has to chose its own power level based on a local information. So it is desirable to find short  schedules using these power assignments, or find  out how much worse can perform such power assignments in comparison to the optimal power assignment.

\textit{Related Work.}
The body of algorithmic work on the scheduling problem is mostly on graph-based models. The inefficiency of graph-based protocols has been shown theoretically as well as experimentally (see~\cite{gronkvist} and~\cite{moscibroda} for example). The problem of scheduling for networks arbitrarily located on the Euclidean plane, as opposed to the network instances with nodes uniformly scattered on some area of plane, is considered in~\cite{beginning1},~\cite{beginning2}. They design algorithms for assigning power of the nodes and scheduling a given set of links, but no approximation guarantees are proven. There is a series of papers considering the problem of scheduling and related problem of \emph{capacity} for given powers (the problem of capacity is to find a maximal subset of links, which can transmit concurrently). The case of uniform power assignment is considered in~\cite{santienq},~\cite{goussevskaya},~\cite{gousshall},~\cite{andrews}, whith a constant factor approximation algorithm designed in~\cite{hallwat}. In~\cite{beginning3},~\cite{fanghaenel} and~\cite{tonoyanlin} scheduling with linear power assignment is considered, obtaining a constant factor centralized algorithm and a distributed algorithm with a good approximation guarantee. In~\cite{kessvok} a $O(\log^2{n})$-approximation randomized distributed algorithms are designed for the family of \emph{sub-linear} power assignments, and in~\cite{hallmit} constant factor approximation algorithms (centralized) are designed for the capacity problem for the same class of power assignments. There is also a considerable effort towards finding power assignments, which would yield better results for scheduling and capacity problems (the problem of PC-scheduling). In~\cite{oblivious} the bidirectional version of PC-scheduling problem is considered, and it is shown that the mean power assignment yields a poly-logarithmic (in the number of links $n$) approximation factor. 
In~\cite{halldorsson},~\cite{tonoyanpc},~\cite{halldorsson1} it is shown that when using the mean power assignment, one can get a $O(\log{n})$-approximation for PC-scheduling in the bidirectional model, and a $O(\log{n}\log{\log{\Lambda}})$-approximation in the directed model, where $\Lambda$ is the ratio between the longest and the shortest link-lengths. In~\cite{kesselheim} a constant factor approximation algorithm is given for capacity maximization problem (with power control), which uses non-local power assignments. In fact it has been shown~\cite{oblivious}, that in the directed model for each oblivious power assignment $P$ there is a network instance, which is SINR-feasible with some power assignment, but yields an unefficient schedule using $P$, but constructed network instances are quite unnatural. 
A variant of PC-scheduling problem, modeling also \emph{multicast} transmissions, is considered in~\cite{erlebach}, and a $\log{\Lambda}$-approximation algorithm is proposed, which uses uniform power assignment. 
Some basic structural properties of a network in the SINR model, such as geometric properties of \emph{reception zones} of a set of nodes, and the use of those in point query algorithms are considered in~\cite{diagrams} and~\cite{diagrams1}.

\textit{About This Paper.}
This paper is based on~\cite{tonoyanpc}, in which it was shown that the results of~\cite{halldorsson} needed correction, and  alternative algorithms were proposed for scheduling using the \emph{mean power assignment}. Here a modification of the scheduling algorithm from~\cite{halldorsson} is proposed, and it is shown that it can be used to obtain relatively short schedules for \emph{independent sets of links}. Then, using some facts from geometric graph theory, it is shown that there is a constant factor approximation algorithm, which partitions a given set of links into independent subsets. Combining these results approximation algorithms are obtained for PC-scheduling problem: a $O(\log{n})$-approximation algorithm for the bidirectional model of communication, and a $O(\log^2{n}\log{\log{\Lambda}})$-approximation algorithm for the directed model of communication (this has been improved to $O(\log{n}\log{\log{\Lambda}})$ in~\cite{halldorsson1}). In both cases the mean power assignment is used. We also show that these algorithms are $O(\log{n})$-approximation algorithms for scheduling problem, when the power assignment is fixed to mean power.

\section{Problem Formulation and Preliminaries}
\subsection{SINR Constraint and Scheduling Problem}

The links in the network are represented by the set $L=\{1,2,\dots,n\}$, where each link $v\in L$ represents a communication request between a sender node $s_v$ and a receiver node $r_v$. The nodes are located on points in a metric space (we will often treat nodes as points) with distance function $d$. The \emph{asymmetric distance} $d_{vw}$ from a link $v$ to a link $w$ is defined in two ways, depending on which \emph{communication model} is adopted: $d_{vw} = d(s_v,r_w)$ in the \emph{directed communication model} and $d_{vw} =\min\{d(s_v,r_w),d(s_v,s_w),d(r_v,r_w),d(r_v,s_w)\}$ in the \emph{bidirectional model.}
Note that in the latter case $d_{vw}=d_{wv}$ (i.e. the distance is actually symmetrical), but in the former case for some pairs $v$,$w$ it can be $d_{vw}\neq d_{wv}$.

The \emph{length} of a link $v$ is  $l_v=d(s_v,r_v)$.

There is a power assignment $P:L\rightarrow R_+$, which assignes a positive number $P_v$ to each link $v$. This value determines the power of transmission of a transmitting node in $v$. In the directed model only the sender node is transmitting, so the power assignment means assigning powers to the sender nodes. In the bidirectional model the communication is bilateral, so both sender and receiver nodes of a link are assigned the same power.

We adopt the \emph{path loss radio propagation} model for the reception of signals, where   the signal received from a node $x$ of the link $v$ at some node $y$ is $P_v/d(x,y)^\alpha$, where $\alpha>2$ denotes the \emph{path loss exponent}. We adopt the \emph{physical interference model}, where a communication $v$ is done successfully if and only if the following condition holds:
\begin{equation}\label{E:sinr}
\frac{P_v/l_v^\alpha}{\sum_{w\in S\setminus\{v\}}{P_w/d_{wv}^\alpha}+N}\geq\beta,
\end{equation}
where $N$ denotes the ambient noise, $S$ is the set of concurrently scheduled links in the same \emph{slot}, and $\beta\geq 1$ denotes the minimum SINR(signal-to-interference-plus-noise-ratio) required for the transmission to be successfully done. We say that $S$ is \emph{SINR-feasible} if (\ref{E:sinr}) holds for each link in $S$. 

In the problem of \emph{scheduling with power control (PC-scheduling)}  given the set $L$ of links, one needs to choose a power assignment, and split $L$ into  SINR-feasible subsets (slots) with respect to the chosen power assignment, such that the number of slots is the minimum. The collection of such subsets is called \emph{schedule}, and the number of slots in a schedule is called \emph{the length} of the schedule. In the problem of \emph{scheduling with given powers} given the set $L$ and a power assignment, one needs to schedule $L$ into minimum number of slots with respect to the given power assignment. 

Note that each of these problems can be stated for both directed and bidirectional model. If for some statement we don't explicitly mention the model, then it is stated for both models.

\subsection{Fading Metrics}

We consider \emph{doubling metric spaces}~\cite{heinonen} in this paper. Such a metric space has a characteristic number, which is called \emph{doubling dimension}. We will use the property of doubling metric spaces, which is, each ball of radius $r$ contains at most $C\cdot (r/r')^m$ disjoint balls of a smaller radius $r'$, where $C$ is a constant, and $m$ is the doubling dimension.
It is known that the $k$-dimensional Euclidean space is a doubling metric space with doubling dimension $k$ (see~\cite{heinonen}).

We assume that the path loss exponent $\alpha$ is greater than the doubling dimension of the metric space. The pair of a doubling space and the path loss exponent greater than the dimension is called a \emph{fading metric}.

\subsection{Affectance and $p$-signal Sets}

At first we assume $N=0$ (i.e. there is no ambient noise), $\beta=1$, and strict inequality in (\ref{E:sinr}). We will show that thanks to Theorem~\ref{T:robustness} these assumptions do not have essential effect on the results.
With this assumptions it is convenient to consider the \emph{affectance} of a link $v$ caused by a set of links $S$, which is the inverse of SINR:
\[
a_S(v)=\sum_{w\in S\setminus\{v\}}{\frac{P_w/d_{wv}^\alpha}{P_v/l_v^\alpha}}=\sum_{w\in S\setminus\{v\}}{\frac{P_w}{P_v}\cdot \frac{l_v^\alpha}{d_{wv}^\alpha}}
\]
An important property of affectance is that it is additive, i.e. if there are two disjoint sets $S_1$ and $S_2$, then $a_{S_1\cup S_2}(v)=a_{S_1}(v)+a_{S_2}(v)$.

A \emph{$p$-signal} set or schedule is one where the affectance of any link is less than $1/p$. Note that a set is SINR-feasible if and only if it is a 1-signal set. We will call 1-signal schedule a \emph{SINR-feasible} schedule.

The following result demonstrates the robustness of schedules against small changes on the right side of SINR constraint.
Suppose the power assignment of the nodes is given.
\begin{theorem}\label{T:robustness}\cite{hallwat}
There is a polynomial-time algorithm that takes a $p$-signal schedule and refines into a $p'$-signal schedule, for $p'>p$, increasing the number of slots by a factor of at most $\lceil 2p'/p\rceil^2$.
\end{theorem}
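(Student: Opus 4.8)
The plan is to reduce the statement to a single-slot ``signal strengthening'' step and apply it independently to each of the original slots. Concretely, writing $q = \lceil 2p'/p\rceil$, it suffices to show that any single $p$-signal set $S$ can be partitioned into at most $q^2$ subsets, each of which is a $p'$-signal set; doing this to every slot multiplies the schedule length by at most $q^2$, and the whole procedure is plainly polynomial. So from now on I fix one slot $S$ with $a_S(v) < 1/p$ for every $v \in S$, and only the additivity of affectance is used.

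The one feature I must respect is that affectance is \emph{asymmetric}: the hypothesis bounds the affectance each link \emph{receives}, $\sum_{w} a_w(v) < 1/p$, but says nothing about the affectance a link \emph{causes}. My idea is therefore to fix an arbitrary linear order on $S$ and control, separately, the affectance a link receives from links \emph{before} it and from links \emph{after} it. For the first, I run a forward pass: processing the links in order, I assign to each $v$ a label $a(v) \in \{1,\dots,q\}$ chosen so that the affectance on $v$ from the already-placed links carrying the same label is small. Such a label exists by averaging, since the $q$ quantities $\sum_{w < v,\ a(w)=i} a_w(v)$ (for $i = 1,\dots,q$) sum to at most $a_S(v) < 1/p$, so the smallest is at most $1/(pq) \le 1/(2p')$, using $q \ge 2p'/p$. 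Picking a minimizing label guarantees that within each label class the affectance received by $v$ from earlier members is at most $1/(2p')$.

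A symmetric backward pass (processing in reverse order) assigns a second label $b(v) \in \{1,\dots,q\}$ so that the affectance on $v$ from \emph{later} links of the same $b$-label is likewise at most $1/(2p')$. I then take the final partition to be the $q^2$ classes indexed by the pairs $(a(v),b(v))$. For a link $v$ and the links $w$ sharing both its labels, I split the received affectance into the contribution of earlier such $w$ and of later such $w$; the earlier part is bounded by the forward guarantee (those $w$ share $v$'s $a$-label) and the later part by the backward guarantee (those $w$ share $v$'s $b$-label), so $a_{A_{a(v)} \cap B_{b(v)}}(v) < 1/(2p') + 1/(2p') = 1/p'$. Hence each class is $p'$-signal, and there are $q^2 = \lceil 2p'/p\rceil^2$ of them.

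The step I expect to be the real crux is exactly this handling of asymmetry: because out-affectance is unbounded, one cannot simply color $S$ so as to separate each link from all the links it heavily affects, and a single $q$-coloring aimed only at the received affectance breaks down, since inserting a link raises the affectance of links already present. Splitting the received affectance along the fixed order into an ``earlier'' and a ``later'' half is what sidesteps this, and it is also where the factor $q^2$ (rather than $q$) becomes unavoidable. The remaining details — replacing $\le$ by strict inequality through a negligible loosening of the threshold, and checking that restricting to the same-$b$-label (resp. same-$a$-label) links only decreases the two partial sums — are routine.
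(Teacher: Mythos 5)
This theorem is stated in the paper only as a citation of Halld\'orsson and Wattenhofer~\cite{hallwat}; the paper itself contains no proof to compare against. Your argument is correct and is essentially the standard proof from that reference: reduce to a single $p$-signal slot, use an averaging argument in a forward pass to control in-affectance from earlier links within a label class, a symmetric backward pass for later links, and intersect the two $\lceil 2p'/p\rceil$-way partitions to get $\lceil 2p'/p\rceil^2$ classes each with affectance below $1/(2p')+1/(2p')=1/p'$. Your identification of the asymmetry of affectance (bounded in-affectance, unbounded out-affectance) as the reason a single coloring pass fails, and of the order-based earlier/later split as the fix, is exactly the right crux.
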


The algorithm described in Theorem~\ref{T:robustness} works for both communication models.

\subsection{Independent Sets of Links}

 We call two links $v$ and $w$  \emph{$q$-independent} w.r.t. power assignment $\{P_v\}$, if 
 \[
 a_w(v)<1/q^\alpha\mbox{ and }a_v(w)<1/q^\alpha.
 \]

We are particularly interested in the \emph{mean} power assignment, which is given by assigning to a node of each link $v$ a power $P_v=cl_v^{\alpha/2}$, where $c>0$ is a constant. It is easy to check, that two links $v$ and $w$ are $q$-independent w.r.t. the mean powers if and only if $d_{vw} > q \sqrt{l_w l_v}$  and $d_{wv} > q \sqrt{l_w l_v}.$
In the bidirectional model $d_{wv}$ and $d_{vw}$ are equal, so the links $v$ and $w$ are $q$-independent with the mean powers if and only if $d_{vw} > q \sqrt{l_w l_v}.$

We call two links $v$ and $w$ $q$-independent, if the following inequality holds:
\[
d_{vw}d_{wv} > q^2 l_w l_v.
\]
 Note that for the bidirectional model two links are $q$-independent if and only if they are $q$-independent with the mean power assignment. 

A set $S$ of links is a \emph{$q$-independent} set if each pair of links in $S$ is $q$-independent.

The following lemma immediately follows from the definition of $q$-independence.
\begin{lemma}\label{L:signalslot}
A set of links that belong to the same $q^\alpha$-signal slot in some schedule, is $q$-independent.
\end{lemma}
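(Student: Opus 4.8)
The plan is to reduce from the signal condition on the whole slot to a pairwise condition, and then to eliminate the dependence on the concrete powers. First I would fix any two links $v,w$ in the given $q^\alpha$-signal slot $S$. By definition of a $q^\alpha$-signal slot the total affectances satisfy $a_S(v)<1/q^\alpha$ and $a_S(w)<1/q^\alpha$. Since affectance is additive and each term is nonnegative, the single-link affectances are dominated by the total ones, i.e. $a_w(v)\le a_S(v)<1/q^\alpha$ and $a_v(w)\le a_S(w)<1/q^\alpha$. Thus every pair in $S$ is $q$-independent with respect to the given power assignment $\{P_v\}$.

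The second step is to pass from $q$-independence w.r.t. the powers to the power-free $q$-independence. Writing the two pairwise bounds explicitly gives
\begin{equation*}
\frac{P_w}{P_v}\cdot\frac{l_v^\alpha}{d_{wv}^\alpha}<\frac{1}{q^\alpha}
\quad\text{and}\quad
\frac{P_v}{P_w}\cdot\frac{l_w^\alpha}{d_{vw}^\alpha}<\frac{1}{q^\alpha}.
\end{equation*}
The key idea is to multiply these two inequalities: the power ratios $P_w/P_v$ and $P_v/P_w$ cancel, leaving
\begin{equation*}
\frac{l_v^\alpha l_w^\alpha}{d_{wv}^\alpha d_{vw}^\alpha}<\frac{1}{q^{2\alpha}}.
\end{equation*}
Since all quantities are positive, I would take the $\alpha$-th root and rearrange to obtain $d_{vw}d_{wv}>q^2 l_v l_w$, which is exactly the definition of $q$-independence for the pair $v,w$. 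As $v$ and $w$ were arbitrary, $S$ is a $q$-independent set.

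There is essentially no serious obstacle here; the only point requiring a moment's thought is that multiplying the two power-dependent constraints is precisely what removes the power assignment from the picture, collapsing the two conditions $a_w(v)<1/q^\alpha$ and $a_v(w)<1/q^\alpha$ into the single symmetric geometric condition $d_{vw}d_{wv}>q^2l_vl_w$. I would also remark that the argument is uniform across both communication models, since it uses only the abstract additivity of affectance and the asymmetric distances $d_{vw},d_{wv}$, regardless of their model-specific definitions.
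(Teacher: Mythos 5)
Your proof is correct and matches the argument the paper has in mind: the paper states the lemma without proof as following immediately from the definitions, and the intended reasoning is exactly your chain (additivity/nonnegativity of affectance gives the pairwise bounds $a_w(v),a_v(w)<1/q^\alpha$, and multiplying the two bounds cancels the powers to yield $d_{vw}d_{wv}>q^2l_vl_w$). No gaps.
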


We say that a set of links is \emph{nearly equilength}, if the lengths of any pair of links in the set differ not more than two times. 

The following theorem from~\cite{halldorsson} shows that each $q$-independent set $S$ of nearly equilength links in a fading metric is a $\Omega(q^\alpha)$-signal slot when the uniform powers are used, i.e. all nodes have the same power $P$, for some $P>0$.
\begin{theorem}\label{T:uniforms}\cite{halldorsson}
Let $L$ be a $q$-independent set of nearly equilength links in a fading metric. Then $L$ is a $\Omega(q^\alpha)$-signal set when the powers are uniform.
\end{theorem}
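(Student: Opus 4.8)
The plan is to fix an arbitrary link $v \in L$ and bound its total affectance $a_L(v) = \sum_{w \in L \setminus \{v\}} l_v^\alpha / d_{wv}^\alpha$; with uniform powers the ratio $P_w/P_v = 1$, so only the distance terms survive. I want to show $a_L(v) = O(q^{-\alpha})$, which is exactly the statement that $L$ is an $\Omega(q^\alpha)$-signal set, and since this will hold for every $v$ it suffices to handle one. The approach is a ring/packing argument centered at the receiver $r_v$: I group the interfering senders $s_w$ into geometric annuli according to $d_{wv}$, bound the number of links in each annulus through the doubling property, and sum a geometric series that converges because $\alpha > m$.

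The first and most delicate step is to extract, from the product-form independence $d_{vw}d_{wv} > q^2 l_v l_w$, a genuine lower bound on the single distance $d_{wv}$ that governs the affectance. The product condition alone does \emph{not} bound $d_{wv}$ from below --- one factor may be tiny while the other is huge --- so here the near-equilength hypothesis is essential. Using the triangle inequality along the chain $s_v \to r_v \to s_w \to r_w$ gives $d_{vw} \le l_v + l_w + d_{wv}$, and substituting this into the independence inequality yields a quadratic in $d_{wv}$ whose solution, together with $l_w \in [l_v/2, 2l_v]$, forces $d_{wv} > c q l_v$ for a constant $c$ (once $q$ exceeds a fixed constant). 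An entirely analogous computation applied to a pair $w, w'$ --- bounding $d(s_w, r_{w'})$ and $d(s_{w'}, r_w)$ by $d(s_w, s_{w'})$ plus a link length --- shows the senders are pairwise separated: $d(s_w, s_{w'}) = \Omega(q l_v)$. I expect this pair of elementary but careful estimates to be the main obstacle, since everything downstream depends on the constants they produce.

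With these two facts in hand, the rest is routine. Partition $L \setminus \{v\}$ into annuli $A_i = \{ w : 2^i c q l_v \le d_{wv} < 2^{i+1} c q l_v \}$ for $i \ge 0$; the lower bound $d_{wv} > cql_v$ guarantees every link lands in some $A_i$. Since the senders are pairwise $\Omega(q l_v)$-separated and all senders with $w \in \bigcup_{j \le i} A_j$ lie in a ball of radius $O(2^i q l_v)$ about $r_v$, a standard volume comparison via the doubling property bounds $|A_i|$ by $C' 2^{im}$ for a constant $C'$. Each $w \in A_i$ satisfies $a_w(v) = l_v^\alpha/d_{wv}^\alpha \le (2^i c q)^{-\alpha}$, so the contribution of $A_i$ is $O(2^{im} \cdot 2^{-i\alpha} q^{-\alpha}) = O(q^{-\alpha}\, 2^{i(m-\alpha)})$.

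Summing over $i \ge 0$ gives a geometric series with ratio $2^{m-\alpha} < 1$, which converges precisely because the fading-metric assumption guarantees $\alpha > m$; the total is $O(q^{-\alpha})$ with a constant depending only on $\alpha$, $m$ and the doubling constant $C$. This yields $a_L(v) = O(q^{-\alpha})$ for every $v$, completing the proof. The same scheme applies in the bidirectional model --- there $d_{wv} = d_{vw}$ is the symmetric minimum distance, and the triangle-inequality estimates go through with the same conclusions --- so no separate treatment is needed.
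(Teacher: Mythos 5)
The paper does not actually prove this statement---it is imported verbatim from \cite{halldorsson}---so there is no internal proof to compare against; the closest relative in the paper is Lemma~\ref{L:points}, which packages the same doubling-metric packing idea. Your argument is the standard proof of this fact and it is sound: extracting $d_{wv}=\Omega(ql_v)$ and pairwise sender separation $\Omega(ql_v)$ from the product-form independence via the triangle inequality and near-equilength, then packing annuli with the doubling property and summing the geometric series with ratio $2^{m-\alpha}<1$, all checks out, and your treatment of the bidirectional case is if anything easier since $d_{ww'}$ is already a symmetric minimum there.

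One point deserves emphasis rather than a parenthetical. The caveat you flag---that the sender-separation estimate only yields $d(s_w,s_{w'})=\Omega(ql_v)$ once $q$ exceeds a fixed constant (your computation gives roughly $d(s_w,s_{w'})>(q/2-1)l_{\max}$, so one needs $q>2$ or so)---is not an artifact of your constants but a genuine restriction: for $q=1$ the statement is false. On a line, take senders $a_1<\dots<a_n$ clustered near $0$ and receivers $b_1>\dots>b_n$ clustered near $\ell$; one checks $d_{vw}d_{wv}-l_vl_w=(a_w-a_v)(b_v-b_w)>0$ for all pairs, so the set is strictly $1$-independent and nearly equilength, yet the affectance on any link is $\Theta(n)$. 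So your proof establishes the theorem only for $q$ above an absolute threshold, which is how it is used in the paper's main results ($q=2,3$), but you should state the threshold explicitly---note that Algorithm~\ref{A:indep} and Theorem~\ref{T:scheduleindependent} invoke the statement for all $q\geq1$, where no proof along these lines (or any other) can succeed.
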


\section{Scheduling $q$-independent Sets}\label{S:scheduleindependent}

As it is shown in~\cite{tonoyanpc}, there is a flaw in the proofs of~\cite{halldorsson}, so their results stated for general metrics are still unproven. Here we show that their algorithm can be modified to work for scheduling $q$-independent sets. 

We assume that the network nodes are placed in a fading metric. We need the following definitions to state the scheduling algorithm.
 
A set $S$ of links is called \emph{well-separated}, if for each two links $v,w\in S,$ the we have $\max\{l_v/l_w,l_w/l_v\}\notin(2,n^2)$.

 Two links $v$ and $w$ are said to be \emph{$\tau$-close} under the mean power assignments if $\max\{a_v(w),a_w(v)\}\geq\tau$, i.e. at least one affects the other one more than by $\tau$.

A set of links $S\subseteq L$ is called \emph{$p$-bounded} for $p>0$, if for each link $v\in L$, there are at most $p$ links $w\in S$, such that $n^2l_v\leq l_w$ and $w$ and $v$ are $\displaystyle \frac{1}{2n}$-close.

 Let $q\geq 1$ be a constant. Consider a $q$-independent subset $Q$ of $L$. We describe a procedure, which, if $Q$ is $p$-bounded for some $p>0$, schedules $Q$ into $O(p\log{n})$ slots using the mean power assignment.  The pseudocode is presented in Algorithm~\ref{A:indep}.

\renewcommand {\labelenumi}{\arabic{enumi}.}
\renewcommand{\labelenumii}{\arabic{enumi}.\arabic{enumii}}
\begin{algorithm}\caption{Scheduling independent sets of links.}\label{A:indep}
\begin{enumerate}
\item {Input: a $q$-independent $p$-bounded set $Q$, for some $p>0$ and $q\geq1$}
\item {Let $Q=\cup_i{Q_i}$, where $Q_i=\{t\in Q|l_t\in [2^{i-1}l_{min},2^il_{min})\}$}
\item {Assign $B_i=\cup_j{Q_{i+j\cdot 2\log{n}}}$, for $i=1,2,\dots,2\log{n}$}
\item {Schedule each $B_i=\cup_j{K_j}$, where $K_j=Q_{i+j\cdot2\log{n}}$, the following way}
\begin{enumerate}
\item {Using the algorithm from Theorem~\ref{T:robustness} transform each $K_j$ into an $f$-signal schedule $\Sigma_j=\{S_j^s\}_{s=1}^{k_j}$ with  $f=2^{\alpha/2+1}$}
\item {$s\leftarrow 1$}
\item {Assign $S\leftarrow \cup_j{S_j^s}$: if for some $j$, $k_j<s$, then we take $S_j^s= \emptyset$}\label{I:return}
\item {Sort $S$ in the non-increasing order of link lengths: $l_1\geq l_2\geq\dots l_{|S|}$}
\item {$T_s^r\leftarrow \emptyset, r=1,2,\dots,p+1$}
\item {For $k=1,2,\dots,|S|$ do: find a $T_s^r$ not containing links $u$ with $l_u>n^2l_k$ which are $1/(2n)$-close to $k$, and assign $T_s^r\leftarrow T_s^r\cup\{k\}$ }
\item {$s\leftarrow s+1$: if $s\leq \max{k_j}$, then go to step 4.3, otherwise the schedule for $B_i$ is $\{T_s^r|T_s^r\neq\emptyset\}$}
\end{enumerate}
\item{Output the union of the schedules of all $B_i$}
\end{enumerate}
\end{algorithm}

The algorithm splits the input set into a logarithmic number of well-separated subsets $B_i$, then schedules each $B_i$ separately. First $B_i$ is split into \emph{maximal} equilength subsets $Q_j$ ($l_{min}$ in the algorithm is the minimal link-length). Then each $Q_j$ is scheduled into a constant number of slots with the mean power assignment, using Theorem~\ref{T:uniforms}. To schedule $B_i$, the algorithm takes the union of the first slots of the schedules for all $Q_j$ (which are contained in $B_i$), and schedules them into $p+1$ slots, using the fact that $Q$ is $p$-bounded. Hence we get a schedule with $O(p)$ slots for each $B_i$, and a schedule with $O(p\log{n})$ slots for $Q$. The  correctness of the algorithm is proven in the following theorem.
\begin{theorem}\label{T:scheduleindependent}
Let $Q=\{1,2,\dots,k\}$ be a $q$-inde\-pen\-dent $p$-bounded subset of $L$ for $q\geq 1$. Then Algorithm~\ref{A:indep} schedules $Q$ into $O(p\log{n})$ slots w.r.t. the mean power assignment.
\end{theorem}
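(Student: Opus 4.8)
The plan is to establish two independent facts about Algorithm~\ref{A:indep}: that it emits $O(p\log n)$ slots, and that each emitted slot $T_s^r$ is $1$-signal (SINR-feasible) under the mean power assignment. I would handle the slot count first, since it follows from the stated tools. The partition into the $2\log n$ sets $B_i$ in step~3 supplies the $\log n$ factor, so it suffices to bound the slots used per $B_i$ by $O(p)$. Each class $K_j=Q_a$ is nearly equilength and, being a subset of the $q$-independent set $Q$, is $q$-independent; Theorem~\ref{T:uniforms} then makes it an $\Omega(q^\alpha)$-signal (hence, as $q\ge1$, $\Omega(1)$-signal) set under uniform powers, and Theorem~\ref{T:robustness} refines it into the $f$-signal schedule $\Sigma_j$ with only a constant-factor blow-up, so $\max_j k_j=O(1)$. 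Thus there are $O(1)$ signal levels $s$, each distributed over $p+1$ bins $T_s^r$, giving $O(p)$ slots per $B_i$ and $O(p\log n)$ in total.

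Next I would verify that the greedy assignment in step~4.6 never runs out of bins. When link $k$ is processed, a bin is inadmissible only if it already contains a link $u$ with $l_u>n^2 l_k$ that is $1/(2n)$-close to $k$. Since $Q$ is $p$-bounded, at most $p$ links of $Q$ (hence of $S\subseteq Q$) satisfy $l_u\ge n^2 l_k$ and are $1/(2n)$-close to $k$; these occupy at most $p$ bins, so at least one of the $p+1$ bins is always free.

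The core of the argument is feasibility. Fixing $v\in T_s^r$ with $v\in K_{j_0}$, I would bound the total mean-power affectance $a_{T_s^r}(v)$ by splitting the interferers according to length. The grouping in step~3 makes each $B_i$ \emph{well-separated}: because the classes placed in a single $B_i$ are $2\log n$ indices apart, any two links of $B_i$ have length ratio either below $2$, which forces them into the same class, or at least $n^2$ (up to the harmless factor-$2$ slack of the spacing). For same-class interferers, which all lie in the $f$-signal slot $S_{j_0}^s$, I would convert the uniform-power signal bound to mean powers; over a factor-$2$ length range the affectance grows by at most $2^{\alpha/2}$, so the choice $f=2^{\alpha/2+1}$ yields a same-class affectance below $2^{\alpha/2}/f=1/2$. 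For a different-class interferer $w$, either $l_w>n^2l_v$ or $l_v>n^2l_w$; since links are scanned in non-increasing length order, the assignment rule applied to whichever of $v,w$ is the shorter forbids them from sharing a bin when $1/(2n)$-close, so $a_w(v)<1/(2n)$. As $|T_s^r|\le n$, the different-class interferers contribute less than $n\cdot\frac{1}{2n}=1/2$. Summing the two bounds gives $a_{T_s^r}(v)<1$, so $T_s^r$ is $1$-signal.

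I expect the different-class bound to be the main obstacle, not computationally but conceptually: one must notice that the assignment rule of step~4.6, though stated only in terms of \emph{longer} links, is effectively symmetric when the links are processed in decreasing order of length, so that a much-longer and a much-shorter link are never co-scheduled when $1/(2n)$-close. This is precisely what stops many short links from accumulating interference at $v$'s receiver, the phenomenon that makes mean-power scheduling delicate and that a naive geometric packing sum over all shorter links fails to control. The only remaining care is bookkeeping: confirming that ``different class'' genuinely forces a length ratio $\ge n^2$ (absorbing the factor-$2$ slack into constants) and pinning down the uniform-to-mean conversion that dictates $f=2^{\alpha/2+1}$.
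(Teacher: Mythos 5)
Your proposal follows essentially the same route as the paper's own proof: the $O(\log n)$ factor from the $B_i$ partition, Theorem~\ref{T:uniforms} plus Theorem~\ref{T:robustness} to get constantly many $f$-signal levels per length class, $p$-boundedness to guarantee a free bin among the $p+1$, and the same $1/2+1/2$ split of the affectance into same-class (uniform-to-mean conversion costing $2^{\alpha/2}$ against $f=2^{\alpha/2+1}$) and cross-class (at most $n$ links each contributing less than $1/(2n)$) contributions. The argument is correct, and the factor-$2$ slack in the cross-class length ratio that you flag is likewise glossed over in the paper itself.
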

\begin{proof}
Note that each $B_i$ is a well-separated set, and the number of $B_i$ is $O(\log{n})$, so it suffices to show that each $B_i$ is indeed scheduled into $O(p)$ slots w.r.t. the mean power assignment.
According to Theorem~\ref{T:uniforms},  each $Q_i$ is a $\Omega(q^\alpha)$-signal set w.r.t. uniform power, because each $K_j$ is a nearly equilength set of links, which is also $q$-independent. Using Theorem~\ref{T:robustness}, $K_j$ can be transformed into a $f$-signal schedule with at most $O((f/q^\alpha)^2)$ slots, where $f=2^{\alpha/2+1}$. Let $S_j$ be some slot from the resulting schedule of $K_j$. Let $S=\cup_j{S_j}$. For completing the proof it is enough to show that $S$ is scheduled into $p+1$ SINR-feasible slots.

 For scheduling $S$ the algorithm considers $p+1$ slots $T_i$ for $i=1,2,\dots,p+1$. The algorithm assigns each link $v$ to a slot $T_r$, which does not contain links $w$, such that $l_w\geq n^2l_v$ and $v$ and $w$ are $1/(2n)$-close. Such a set exists because the set $Q$ is $p$-bounded. Consider a link $v\in T_r$ which we took from the slot $S_k$. The affectance by the links which are nearly equilength with $v$ (i.e. links from $S_k\cap T_r$) is at most $1/f$ since the $f$-signal property holds. Changing the power assignment in the group $S_k$ from uniform to mean power increases the affectance by at most $2^{\alpha/2}$, so overall the affectance by the links with nearly the same length as $v$ is at most $2^{\alpha/2}/f=1/2$. For the links from $T_r\setminus S_k$ we have that each of them affects $v$ by less than $1/(2n)$, and since their number is at most $n$, the total affectance by those links, according to the additivity of affectance is at most $1/2$. This shows that $a_{T_r}(v)< 1$, i.e. $T_r$ is SINR-feasible, which completes the proof. 
\end{proof}

 Using the above mentioned algorithm one gets ``short'' schedules for a given $q$-independent set of links, so the next step towards solving PC-scheduling problem is to split the set $L$ into a small number of $q$-independent subsets.

\section{How Good Can the Mean Power Be for PC-scheduling?}

Note that at this point we already can prove bounds for the mean power assignments. According to Lemma~\ref{L:signalslot} a SINR-feasible set is a $1$-independent set, i.e. each schedule splits the set $L$ into $1$-independent subsets, with the number of subsets equal to the length of the schedule. The following theorem is an important result from~\cite{halldorsson} (it is stated in a slightly modified form), which states that independent sets are $p$-bounded for certain value $p$.

Let $\Lambda$ denote the ratio between the maximum and the minimum length of links.
\begin{theorem}\label{T:lambda}\cite{halldorsson}
In the case of directed scheduling each $3$-independent set of links is $p$-bounded with $p=O(\log{\log{\Lambda}})$. In the case of bidirectional scheduling each $2$-independent set of links is $1$-bounded.
\end{theorem}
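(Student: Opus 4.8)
The plan is to prove the two claims about $p$-boundedness separately, treating the bidirectional case first since it is cleaner and the symmetry $d_{vw}=d_{wv}$ simplifies everything. Recall that a set $Q$ is $p$-bounded if for every link $v\in L$, the number of links $w\in Q$ with $n^2 l_v \le l_w$ and $v,w$ being $\frac{1}{2n}$-close is at most $p$. So I would fix an arbitrary link $v$ and try to bound the number of \emph{long} links $w$ (those with $l_w \ge n^2 l_v$) in a given independent set $Q$ that are $\frac{1}{2n}$-close to $v$.

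For the bidirectional case, I would first translate the hypotheses into geometric distance inequalities using the characterizations given in the excerpt. Being $2$-independent means $d_{vw} > 2\sqrt{l_v l_w}$ for any pair in $Q$, while being $\frac{1}{2n}$-close means $\max\{a_v(w), a_w(v)\} \ge \frac{1}{2n}$; under the mean power assignment the affectance $a_w(v)$ unwinds to $\left(\frac{\sqrt{l_v l_w}}{d_{wv}}\right)^\alpha$, so $\frac{1}{2n}$-closeness becomes an \emph{upper} bound $d_{vw} \le (2n)^{1/\alpha}\sqrt{l_v l_w}$. The plan is to show these two constraints, combined with $l_w \ge n^2 l_v$, are contradictory when there are two such long links $w_1, w_2$. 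The idea is that if $w_1$ and $w_2$ are both $\frac{1}{2n}$-close to $v$, they are both geometrically near $v$ (at distance $O(n^{1/\alpha}\sqrt{l_v l_w})$, which is small relative to $\sqrt{l_{w_1} l_{w_2}}$ since $l_v$ is tiny), so by a triangle inequality $w_1$ and $w_2$ are close to each other — violating their mutual $2$-independence. Carefully choosing the constants and using $\alpha > 2$ should force at most one such $w$, giving $1$-boundedness.

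For the directed case with $3$-independence, the symmetry is lost, so I would work with both asymmetric distances and the product condition $d_{vw}d_{wv} > 9\, l_v l_w$. The $O(\log\log\Lambda)$ bound suggests a different mechanism: rather than forcing at most a constant, I expect to \emph{group} the qualifying long links $w$ by their length scale (there are $O(\log\Lambda)$ dyadic length classes above $n^2 l_v$), argue that within each coarse band only a bounded number can be simultaneously close to $v$ and mutually $3$-independent, and then exploit the doubling/fading geometry together with the product form of independence to collapse the $\log\Lambda$ bands down to $\log\log\Lambda$. The doubling dimension bound — each ball of radius $r$ holds at most $C(r/r')^m$ disjoint balls of radius $r'$ — is the tool for bounding how many mutually separated senders or receivers can cluster near $v$ at a fixed scale.

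The hard part will be the directed case, specifically extracting the $\log\log\Lambda$ rather than a naive $\log\Lambda$. The subtlety is that in the directed model the closeness of $w$ to $v$ controls only one of the two asymmetric distances, while independence is a condition on the \emph{product} $d_{vw}d_{wv}$; so a link $w$ can be close to $v$ in one direction yet far in the other, and the length scales interact with $\alpha$ in a way that makes the counting delicate. I expect the essential estimate to be a packing argument in the fading metric showing that the number of length-scales contributing a nontrivial count shrinks logarithmically, and getting the constants and the $n^2$ length-gap to cooperate is where the real work lies; for this I would lean on the referenced result from~\cite{halldorsson} and reconstruct its geometric core rather than reprove the packing bounds from scratch.
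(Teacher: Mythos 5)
The paper does not actually prove this statement: Theorem~\ref{T:lambda} is imported verbatim from~\cite{halldorsson}, so there is no in-paper proof to compare against. The closest in-paper analogue is Theorem~\ref{T:lambdam}, proved via Lemma~\ref{L:directlem} and the packing Lemma~\ref{L:points}, but that concerns the stronger two-sided notion of independence (w.r.t.\ mean power) and yields $O(1)$-boundedness, not the $O(\log\log\Lambda)$ bound at issue here. Judged on its own merits, your bidirectional half is essentially correct and completable: from $d_{vw_i}\le (2n)^{1/\alpha}\sqrt{l_v l_{w_i}}$ and $l_v\le l_{w_i}/n^2$ one gets $\sqrt{l_v l_{w_i}}\le \sqrt{l_{w_1}l_{w_2}}/n$ and $l_v\le \sqrt{l_{w_1}l_{w_2}}/n^2$, so the triangle inequality through $v$'s node pair gives $d_{w_1w_2}\le \bigl(2(2n)^{1/\alpha}/n+1/n^2\bigr)\sqrt{l_{w_1}l_{w_2}}<2\sqrt{l_{w_1}l_{w_2}}$ for $\alpha>2$ and $n$ beyond a small constant, contradicting $2$-independence; hence at most one qualifying long link, i.e.\ $1$-boundedness.

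The directed half, however, contains a genuine gap: you never identify the mechanism that produces $\log\log\Lambda$, and your stated plan (dyadic length bands, a bounded count per band, then ``collapse $\log\Lambda$ bands to $\log\log\Lambda$'' by a packing argument) does not work as described --- a packing/doubling argument at a fixed scale gives per-band constants but offers no reason why only $\log\log\Lambda$ of the $\log\Lambda$ bands can be populated. The actual engine is a \emph{doubly-exponential growth} argument on lengths, not a packing argument: split the qualifying links into two groups according to which of $d_{vw},d_{wv}$ is small; within one group, for any two links $w,w'$ with $l_w\le l_{w'}$, the relevant endpoints both lie within $(2n)^{1/\alpha}\sqrt{l_v l_{w'}}$-type distance of the same node of $v$, so $d_{ww'}\le l_w+D$ and $d_{w'w}\le l_{w'}+D$ with $D\le 2(2n)^{1/\alpha}\sqrt{l_v l_{w'}}$; if $D\le l_w$ this forces $d_{ww'}d_{w'w}\le 4\,l_w l_{w'}<9\,l_w l_{w'}$, contradicting the product form of $3$-independence, so necessarily $l_{w'}\gtrsim l_w^2/\bigl((2n)^{2/\alpha}l_v\bigr)$. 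Combined with $l_w\ge n^2 l_v$, the normalized lengths at least square at each step, so each group has $O(\log\log\Lambda)$ members. Explicitly deferring to ``the geometric core of~\cite{halldorsson}'' for exactly this step means the part of the theorem that carries the nontrivial bound is left unproved.
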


\begin{theorem} \label{C:corollary1}
 For the directed model of communication the mean power assignment is a $O(\log{n}\log{\log{\Lambda}})$-approximation for the problem PC-scheduling in fading metrics.
 For bidirectional model of communication the mean power assignment is a $O(\log{n})$-approximation for the problem PC-scheduling in fading metrics.
\end{theorem}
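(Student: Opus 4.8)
The plan is to bound the length of the mean-power schedule against $\mathrm{OPT}$, the optimal number of slots taken over \emph{all} power assignments, by transforming an optimal schedule into one that the machinery of Section~\ref{S:scheduleindependent} can process. First I would fix an optimal PC-schedule: it partitions $L$ into $\mathrm{OPT}$ SINR-feasible slots with respect to some arbitrary (possibly non-oblivious) power assignment $P^*$. Since a SINR-feasible slot is precisely a $1$-signal slot, the task is to pass from these $1$-signal slots to slots that are sufficiently \emph{independent} to invoke Theorem~\ref{T:lambda}, and then to reschedule each independent piece with the mean power.

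The crucial observation is that $q$-independence, as captured by $d_{vw}d_{wv}>q^2 l_v l_w$, is a purely geometric property that makes no reference to any power assignment. To upgrade the optimal slots to the required independence level, I would apply Theorem~\ref{T:robustness} to the optimal schedule, refining it from a $1$-signal schedule (with respect to $P^*$) into a $q^\alpha$-signal schedule, where $q=3$ in the directed model and $q=2$ in the bidirectional model. Because $q$ and $\alpha$ are constants, the factor $\lceil 2q^\alpha\rceil^2$ by which the slot count grows is $O(1)$, so the refined schedule still has $O(\mathrm{OPT})$ slots. By Lemma~\ref{L:signalslot}, every slot of this refined schedule is a $q$-independent set, and since independence is power-free, this conclusion survives even though we now abandon $P^*$ in favour of the mean power.

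Next, each such $q$-independent slot is $p$-bounded by Theorem~\ref{T:lambda}, with $p=O(\log\log\Lambda)$ in the directed model and $p=1$ in the bidirectional model. Feeding each of these $O(\mathrm{OPT})$ independent, $p$-bounded sets into Algorithm~\ref{A:indep}, Theorem~\ref{T:scheduleindependent} reschedules it into $O(p\log n)$ slots using the mean power assignment. Multiplying the slot counts yields $O(\mathrm{OPT}\cdot\log n\log\log\Lambda)$ slots for the directed model and $O(\mathrm{OPT}\cdot\log n)$ slots for the bidirectional model, which are exactly the claimed approximation ratios.

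The main obstacle is really a conceptual one: bridging the optimal power assignment, over which $\mathrm{OPT}$ is defined, to the mean power assignment that the output schedule is forced to use. The refinement-plus-independence argument is what makes this transfer possible at only a constant multiplicative cost, precisely because geometric $q$-independence is invariant under the choice of powers. The remaining work is bookkeeping: verifying that the blow-up from Theorem~\ref{T:robustness} depends only on $q$ and $\alpha$, and confirming that the two communication models enter the final bound solely through the value of $p$ supplied by Theorem~\ref{T:lambda}.
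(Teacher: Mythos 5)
Your proposal is correct and follows essentially the same route as the paper's own proof: refine the optimal $1$-signal schedule into a $3^\alpha$-signal (resp.\ $2^\alpha$-signal) schedule via Theorem~\ref{T:robustness}, conclude $3$-independence (resp.\ $2$-independence) of each slot from Lemma~\ref{L:signalslot}, invoke Theorem~\ref{T:lambda} for the $p$-boundedness, and reschedule each slot with Theorem~\ref{T:scheduleindependent}. Your explicit remark that geometric $q$-independence is power-free is exactly the (implicit) bridge the paper relies on to pass from the optimal power assignment to the mean power assignment.
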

\begin{proof}
We prove the claim for the directed model. The proof for the bidirectional model is similar.
Suppose we are given the optimal power assignment and the optimal schedule $\Sigma$ w.r.t. that power assignment. Obviously, $\Sigma$ is a $1$-signal schedule (according to our notation). Using the algorithm from Theorem~\ref{T:robustness}, $\Sigma$ can be converted to a $3^\alpha$-signal schedule $\Sigma'=(S_1,S_2,\dots,S_k)$, by increasing the length only by a constant factor. According to Lemma~\ref{L:signalslot} each $S_i$ is a $3$-independent set, so from Theorem~\ref{T:lambda} 
we have that each set $S_i$ is $p$-bounded with $p=O(\log{\log{\Lambda}})$. By applying Theorem~\ref{T:scheduleindependent}, each $S_i$ can be scheduled into $O(\log{n}\log{\log{\Lambda}})$ slots, so the whole set $L$ can be scheduled using $O(\log{n}\log{\log{\Lambda}}\cdot k)$ slots with the mean power assignment, which completes the proof. 
\end{proof}

In next two sections we present an algorithm which approaches the bounds described in Theorem~\ref{C:corollary1}.

\section{Splitting $L$ into $q$-independent Subsets}

We use graph-theoretic results for showing that a set of links can be split into a near-minimal number of $q$-independent subsets. 
First we present an algorithm for coloring a certain class of graphs, which we call \emph{$t$-strong graphs}.

\subsection{$t$-strong Graphs}

Let $G$ be a simple undirected graph. We denote by $V(G)$ the vertex-set of $G$.
  For a vertex $v$ of $G$ we denote by $N_G(v)$(or simply $N(v)$) the subgraph of $G$ induced  by the set of neighbors of $v$ in $G$.
   
   For an integer $t>0$ we say $G$ is a \emph{$t$-strong graph} if for each induced subgraph $G'$ of $G$ there is a vertex $v$ in $G'$, such that the graph $N_{G'}(v)$ does not have independent sets of size more than $t$.
   
Using the ideas of \cite{udg} for coloring Unit Disk Graphs, we prove that there is a $t$-approximation algorithm for coloring a $t$-strong graph.
The following theorem from \cite{hochbaum} describes the algorithm which we use. It is based on the results of \cite{szekeres}.
\begin{theorem}\label{T:hochbaum}
\cite{hochbaum} Let $G=(V,E)$ be a simple undirected graph and let $\delta(G)$ denote the largest $\delta$ such that $G$ contains a subgraph in which every vertex has a degree at least $\delta$. Then there is an algorithm coloring $G$  with $\delta(G)+1$ colors, with running time $O(|V|+|E|)$.
\end{theorem}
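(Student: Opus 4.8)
The plan is to identify the quantity $\delta(G)$ with the \emph{degeneracy} of $G$ and to exploit the vertex ordering it induces. By definition $\delta(G)=\max_H\min_{v\in V(H)}\deg_H(v)$, where $H$ ranges over the subgraphs of $G$. The decisive consequence is that \emph{every} subgraph $H$ of $G$ contains a vertex whose degree in $H$ is at most $\delta(G)$; otherwise $H$ would itself witness a value of $\delta$ strictly larger than $\delta(G)$, a contradiction.

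First I would build an ordering $v_1,v_2,\dots,v_n$ of $V$ by the smallest-last rule: repeatedly locate a vertex of minimum degree in the current graph, assign it the largest unused index, and delete it. Thus $v_n$ is deleted first, $v_{n-1}$ second, and in general $v_i$ is the minimum-degree vertex of the subgraph induced by $\{v_1,\dots,v_i\}$ at the moment of its removal. By the consequence above, its degree in that induced subgraph is at most $\delta(G)$, so each $v_i$ has at most $\delta(G)$ neighbors among $v_1,\dots,v_{i-1}$.

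Next I would color the vertices greedily in the order $v_1,v_2,\dots,v_n$, always giving $v_i$ the smallest color not already used on a neighbor. When $v_i$ is processed, the only colored neighbors are the at most $\delta(G)$ earlier ones, so some color in $\{1,\dots,\delta(G)+1\}$ is necessarily free. This yields a proper coloring with at most $\delta(G)+1$ colors, establishing the bound.

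The main obstacle is the linear running time, since naively recomputing a minimum-degree vertex at each step is far too expensive. Here I would maintain a \emph{bucket queue}: an array of doubly linked lists indexed by current degree, together with a back-pointer from each vertex to its position. Extracting a minimum-degree vertex amounts to reading from the lowest nonempty bucket, while deleting it requires decrementing the degree of each surviving neighbor and sliding that neighbor down by one bucket, each in $O(1)$ time via its back-pointer. Every edge thus triggers only a constant amount of work, and by tracking the lowest nonempty bucket incrementally the total scanning cost stays linear; the greedy coloring pass is likewise $O(|V|+|E|)$. The delicacy lies entirely in arranging these data-structure updates so that no bucket is rescanned more than a bounded number of times.
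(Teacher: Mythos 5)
The paper does not prove this statement at all: it is imported verbatim as a cited result of Hochbaum (resting on Szekeres--Wilf), so there is no in-paper argument to compare yours against. Your proof is the standard and correct one. The quantity $\delta(G)$ defined in the statement is exactly the degeneracy of $G$, your observation that every subgraph must contain a vertex of degree at most $\delta(G)$ is the right key consequence of maximality, the smallest-last ordering then gives each $v_i$ at most $\delta(G)$ already-colored neighbors, and greedy coloring finishes with $\delta(G)+1$ colors; this is precisely the Szekeres--Wilf bound that the cited works establish. The linear-time implementation via a bucket queue with back-pointers is the Matula--Beck technique, and your accounting (each edge charged $O(1)$ for a degree decrement, the lowest-nonempty-bucket pointer dropping by at most one per deletion so that total scanning stays linear) is the standard way to make it rigorous. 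No gaps.
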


We will refer to the algorithm from Theorem \ref{T:hochbaum} as \emph{Hochbaum's algorithm}. The proof of the following theorem is similar to the proof of Theorem 4.5 of~\cite{udg} and is presented in the appendix.

\begin{theorem}\label{T:coloringdstrong}
Hochbaum's algorithm applied to a $t$-strong graph $G$ gives a $t$-approximation to the optimal coloring.
\end{theorem}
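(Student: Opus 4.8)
The plan is to compare the number of colors Hochbaum's algorithm produces with the optimal number of colors, the chromatic number $\chi(G)$. By Theorem~\ref{T:hochbaum} the algorithm uses exactly $\delta(G)+1$ colors, so it suffices to establish the inequality
\[
\delta(G)+1 \le t\cdot\chi(G),
\]
which says the optimum is at least a $1/t$ fraction of what the algorithm uses. To bound $\delta(G)$ I would first pass to a witnessing subgraph: by definition of $\delta(G)$ there is a subgraph $H$ of $G$ in which every vertex has degree at least $\delta(G)$. I would argue that $H$ may be taken to be an \emph{induced} subgraph of $G$, since restricting $G$ to the vertex set of $H$ only adds edges and hence cannot decrease any vertex degree; this is the step that lets me invoke the $t$-strong hypothesis, which speaks only about induced subgraphs.

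Next I would apply the definition of $t$-strongness to the induced subgraph $H$: there is a vertex $v\in H$ for which $N_H(v)$ contains no independent set of size more than $t$. Because $v$ has degree at least $\delta(G)$ in $H$, the neighborhood $N_H(v)$ has at least $\delta(G)$ vertices. I would then look at the subgraph $G'$ of $G$ induced by $\{v\}\cup V(N_H(v))$, which has at least $\delta(G)+1$ vertices. The key observation is that $G'$ still has no independent set larger than $t$: any independent set in $G'$ either avoids $v$, in which case it lies inside $N_H(v)$ and so has size at most $t$, or it contains $v$, in which case, since every other vertex of $G'$ is a neighbor of $v$, it is exactly $\{v\}$, of size $1\le t$. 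Finally, since each color class in any proper coloring is an independent set, a graph on $m$ vertices with no independent set larger than $t$ needs at least $m/t$ colors; applying this to $G'$ and using that $G'$ is a subgraph of $G$ gives
\[
\chi(G)\ge\chi(G')\ge\frac{|V(G')|}{t}\ge\frac{\delta(G)+1}{t},
\]
which is exactly the desired inequality.

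The main obstacle, and the point requiring the most care, is the reduction to an \emph{induced} witnessing subgraph in the first step: the quantity $\delta(G)$ is defined through arbitrary (not necessarily induced) subgraphs, whereas the $t$-strong property is a statement about induced subgraphs, so the two notions must be reconciled before they can be combined. Once $H$ is taken induced, the remaining work, namely verifying that passing to $\{v\}\cup N_H(v)$ preserves the bound on independent-set size and then applying the elementary $\chi\ge m/t$ estimate, is routine. I expect the overall argument to mirror the proof of Theorem~4.5 of~\cite{udg}, as the statement indicates.
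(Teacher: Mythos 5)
Your proposal is correct and follows essentially the same route as the paper: bound the algorithm's output by $\delta(G)+1$ via Hochbaum's theorem, pass to a minimum-degree-$\delta(G)$ subgraph, apply the $t$-strong property to find a vertex $v$ whose neighborhood has independence number at most $t$, and lower-bound the chromatic number of the subgraph induced by $\{v\}\cup V(N_H(v))$. The only (immaterial) differences are that you explicitly justify taking the witnessing subgraph $H$ to be induced --- a point the paper glosses over but which is indeed needed to invoke $t$-strongness --- and that you fold $v$ into the independence-number count to get the bound $(\delta(G)+1)/t$ where the paper counts $v$'s color separately to get the marginally sharper $\delta(G)/t+1$.
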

\begin{proof}
Let $OPT$ denote the number of colors used in the optimal coloring of $G$, $A$ denote the number of colors used by Hochbaum's algorithm , and $\delta(G)$ be as in Theorem \ref{T:hochbaum}.
According to Theorem \ref{T:hochbaum},
 $ A\leq \delta(G)+1$.
 
Now let $H$ be a subgraph of $G$ in which every vertex has a degree at least $\delta(G)$. According to the definition of $t$-strong graphs, there is a vertex $v$ in $H$, for which the graph $N_H(v)$ has no independent set with more than $t$ vertices, so any vertex coloring of $N_H(v)$ uses at least $|V(N_H(v))|/t$ colors. On the other hand, from the definition of $H$ we have $|V(N_H(v))|\geq \delta(G)$, so for coloring the subgraph of $G$ induced by the vertex-set $V(N_H(G))\cup \{v\}$ we need at least $\delta(G)/t+1$ colors, so
\[
OPT\geq \delta(G)/t+1\geq (A-1)/t+1,
\]
or $A\leq t\cdot OPT-t+1$, which completes the proof.
\end{proof}

\subsection{Link-graphs are $O(1)$-strong}

For $q\geq 1$, when the directed model of communication is considered, let $D_q(L)$ be the graph with vertex set $L$ (i.e. the vertices are the links from $L$), where two vertices $v$ and $w$ are adjacent in $D_q(L)$ if and only if $v$ and $w$ are \emph{not} $q$-independent w.r.t. the mean power assignment, i.e.
\begin{equation}\label{E:adjacentd}
 \mbox{either } d_{vw} \leq q \sqrt{l_w l_v} \mbox{ or } d_{wv} \leq q \sqrt{l_w l_v}.
 \end{equation}
  For the bidirectional model let $B_q(L)$ be the graph with vertex set $L$ and with two vertices $v$ and $w$ adjacent if and only if they are not $q$-independent, i.e.
\begin{equation}\label{E:adjacentb}
d_{vw} \leq q \sqrt{l_w l_v}.
\end{equation}

We show that $B_q(L)$ is $t$-strong, and $D_q(L)$ is $t'$-strong for some constants $t,t'>0$, so that Hochbaum's algorithm finds colorings for those graphs, which approximate the respective optimal colorings within constant factors. This is shown in the following theorems.

The use of the properties of doubling metrics is encapsulated in the following lemma.
\begin{lemma}\label{L:points}
Let $\{t_0,t_1,t_2,\dots,t_k\}$ be a set of points in an $m$-dimensional doubling metric space and $c_1,c_2,c_3$ and $\{b_0,b_1,b_2,\dots,b_k\}$ be positive reals, such that 

1) $b_0\leq c_1b_i$, for $i=1,2,\dots,k$,

2) $d(t_0,t_i)\leq c_2b_0b_i$ for $i=1,2,\dots,k$ and 

3) $d(t_i,t_j)>c_3b_ib_j$ for $i,j=1,2,\dots,k, i\neq j$.\\
Then $\displaystyle k\leq C(\frac{4c_2}{c_1c_3^2}+1)^m+1$.
\end{lemma}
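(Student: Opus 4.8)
The plan is to prove the bound by a packing argument in the doubling metric: I will exhibit a family of pairwise disjoint balls of a common radius, one centered at each $t_i$ ($i\ge 1$), all contained in a single ball centered at $t_0$, and then read off the estimate from the doubling property (each ball of radius $\rho$ contains at most $C(\rho/\rho')^m$ disjoint balls of a smaller radius $\rho'$). Condition 3 supplies the separation needed for disjointness, condition 2 supplies the enclosing ball, and condition 1 keeps the radii comparable. The one structural subtlety, which is also the source of the additive ``$+1$'', is that the scale parameters $b_i$ need not be comparable a priori, so a preprocessing step is needed before the packing estimate applies.

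The key preliminary step is to bound the spread of the $b_i$. The lower bound is immediate from condition 1: $b_i \ge b_0/c_1$. For an upper bound I would combine the triangle inequality with conditions 2 and 3: for any distinct $i,j \ge 1$,
\[
c_3 b_i b_j < d(t_i,t_j) \le d(t_0,t_i)+d(t_0,t_j) \le c_2 b_0 (b_i+b_j),
\]
so $\frac{b_i b_j}{b_i+b_j} < \frac{c_2 b_0}{c_3}$, and since $\frac{b_i b_j}{b_i+b_j} \ge \tfrac12\min\{b_i,b_j\}$ this forces $\min\{b_i,b_j\} < \frac{2c_2 b_0}{c_3}$. Hence at most one index can have $b_i \ge \frac{2c_2 b_0}{c_3}$; I set that single point aside (this is exactly the ``$+1$'' in the final count) and work with the remaining points, for which $b_0/c_1 \le b_i < 2c_2 b_0/c_3$.

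With the $b_i$ confined to a bounded range, the packing is routine. Every remaining $t_i$ lies in the ball $B(t_0,R)$ with $R \le c_2 b_0\max_i b_i < 2c_2^2 b_0^2/c_3$ by condition 2, while condition 3 gives $d(t_i,t_j) > c_3 b_i b_j \ge c_3 b_0^2/c_1^2$, so the open balls $B(t_i,r)$ of radius $r = \tfrac12 c_3 b_0^2/c_1^2$ are pairwise disjoint and all contained in $B(t_0,R+r)$. Applying the doubling property to these disjoint equal-radius balls yields a count of at most $C\big((R+r)/r\big)^m = C\big(1+R/r\big)^m$; substituting the estimates for $R$ and $r$ produces a bound of the stated shape $C(\Theta(1)+1)^m$ (the explicit constant inside being whatever the chosen radii give), and adding back the one excluded point gives $k \le C(\cdots)^m+1$.

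The main obstacle is precisely the upper bound on the $b_i$: without it the enclosing radius $R$ could be arbitrarily large relative to the separation radius $r$, and the doubling argument would give no finite bound. The triangle-inequality computation above is what rescues the situation, at the cost of discarding a single point. Everything else---choosing the common radius $r$ for disjointness and locating the enclosing ball---is forced by conditions 2 and 3, and the exponent $m$ together with the leading constant $C$ come directly from the doubling dimension.
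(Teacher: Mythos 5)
Your proposal is correct and follows essentially the same route as the paper's proof: the triangle inequality combined with conditions 2) and 3) shows that all but at most one of the $b_i$ are bounded above by $O(c_2b_0/c_3)$, and the remaining points are then handled by packing disjoint balls of a common radius inside a ball around $t_0$ and invoking the doubling property. (Your bookkeeping of the constants is in fact slightly more careful than the paper's, but since the bound is only used as an $O(1)$ estimate this makes no difference.)
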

\begin{proof}
From the triangle inequality, for $i,j=1,2,\dots,k, i\neq j$ we have 
\[
d(t_i,t_j)\leq d(t_0,t_i)+d(t_0,t_j),
\]
so using 2) for the left side and 3) for the right side, we get 
\begin{equation}\label{E:bs}
c_2b_0b_i+c_2b_0b_j>c_3b_ib_j
\end{equation}
Suppose the smallest between $b_i$ and $b_j$ is $b_i$. Then from (\ref{E:bs}) we get
$\displaystyle b_0>\frac{c_3}{2c_2}b_i$, thus we have that $b_0$ is more than $\displaystyle \frac{c_3}{2c_2}b_i$ for all $i>0$ but one: without loss of generality suppose those indices are $1,2,\dots,k-1$. Then we have
\[
d(t_0,t_i)<\frac{2c_2}{c_3}b_0^2\mbox{ and }d(t_i,t_j)>c_1c_3b_0^2
\]
for $i,j=1,2,\dots,k-1, i\neq j$. The last two inequalities imply that the balls $B(t_i,c_1c_3b_0^2/2)$ for different $i$ don't intersect, and are contained in the ball $\displaystyle B(t,(2c_2/c_3+c_1c_3/2)b_0^2)$. As the metric space has a doubling dimension $m$, we get $\displaystyle k-1\leq C(\frac{4c_2}{c_1c_3^2}+1)^m$, which completes the proof. 
\end{proof}

\begin{theorem}\label{T:strongb}
The graph $B_q(L)$ is $O(1)$-strong.
\end{theorem}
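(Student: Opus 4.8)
The plan is to verify the defining property of $t$-strong graphs directly, using the shortest link as the witness vertex and reducing the size bound to Lemma~\ref{L:points}. Fix an arbitrary induced subgraph $G'$ of $B_q(L)$, say on the link-set $L'\subseteq L$, and let $v$ be a link of minimum length in $L'$. I claim $N_{G'}(v)$ has no large independent set. A neighbor $w$ of $v$ in $G'$ is by definition a link that is not $q$-independent from $v$, so $d_{vw}\le q\sqrt{l_v l_w}$; and since $v$ is shortest, $l_w\ge l_v$. An independent set $W=\{w_1,\dots,w_k\}$ in $N_{G'}(v)$ is then a set of pairwise $q$-independent links, each adjacent to $v$, and the goal is to bound $k$ by an absolute constant.

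The key structural observation is that in the bidirectional model $d_{vw}$ equals the minimum distance between an endpoint of $v$ and an endpoint of $w$, i.e. $d_{vw}=\min_{x\in\{s_v,r_v\},\,y\in\{s_w,r_w\}} d(x,y)$. Hence for each $w_i\in W$ the minimum defining $d_{vw_i}$ is achieved at one of the two endpoints of $v$. I would split $W$ into at most two groups according to which endpoint of $v$ realizes this minimum, and bound each group separately; a constant bound on each group then gives a constant bound on $|W|$.

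Within a group, say the one realized by the endpoint $x_0$ of $v$, I would apply Lemma~\ref{L:points} with $t_0=x_0$, with $t_i$ chosen to be the endpoint of $w_i$ closest to $x_0$, and with $b_0=\sqrt{l_v}$ and $b_i=\sqrt{l_{w_i}}$. The three hypotheses then read: (1) $b_0\le b_i$ holds with $c_1=1$ because $v$ is shortest; (2) $d(t_0,t_i)=d_{vw_i}\le q\sqrt{l_v l_{w_i}}=q\,b_0 b_i$ holds with $c_2=q$ by adjacency and the choice of $t_i$; and (3) $d(t_i,t_j)\ge d_{w_iw_j}>q\sqrt{l_{w_i}l_{w_j}}=q\,b_i b_j$ holds with $c_3=q$, because $w_i,w_j$ are $q$-independent and $d_{w_iw_j}$ is the minimum over all endpoint pairs of $w_i,w_j$, hence no larger than the particular distance $d(t_i,t_j)$. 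Lemma~\ref{L:points} then yields a group size of at most $C(4q/q^2+1)^m+1=C(4/q+1)^m+1$, a constant depending only on the doubling dimension $m$, the doubling constant $C$, and $q$. Summing over the two groups bounds $|W|$ by $t=2\bigl(C(4/q+1)^m+1\bigr)=O(1)$, so $v$ witnesses the $t$-strong property for $G'$, and $B_q(L)$ is $O(1)$-strong.

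The main obstacle, and the step I would be most careful about, is the reduction to Lemma~\ref{L:points} through the endpoint bookkeeping: the distance $d_{vw}$ is a minimum over four endpoint pairs, so one must fix a single reference point $t_0$ per group (which forces the two-group split) and then check that choosing $t_i$ as an endpoint of $w_i$ is simultaneously compatible with hypothesis (2) for the pair $(v,w_i)$ and with hypothesis (3) for every pair $(w_i,w_j)$. The latter is exactly where $q$-independence of $W$ and the minimum-distance characterization of $d_{w_iw_j}$ are used.
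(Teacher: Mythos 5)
Your proposal is correct and follows essentially the same route as the paper's proof: take the shortest link in the induced subgraph as the witness, split its neighbours into two groups according to which endpoint of $v$ realizes the closest approach, and apply Lemma~\ref{L:points} with $c_1=1$, $c_2=c_3=q$ to bound each group by $C(4/q+1)^m+1$. The only (harmless) difference is that you state the induced-subgraph quantification explicitly, which the paper leaves implicit.
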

\begin{proof}
Consider the vertex $v$ with $l_v$ being minimum over all links, and a subset $I=\{1,2,\dots,k\}$ of vertices of $N(v)$, which is an independent set in $N(v)$. Our goal is to show that $|I|=O(1)$.

Consider the set of nodes $R=\{t_1,t_2,\dots,t_k\}$, where $t_i$ is the node (sender or receiver) of the link $i$, closest to the link $v$ (in terms of the distance between two sets of points). $R$ can be split into two subsets, first with nodes for which the closest node of $v$ is the sender of $v$, and the others for which the receiver of $v$ is closer. We assume that $R$ is anyone of that subsets: if we show that $|R|=O(1)$, then the proof follows. We denote by $t_0$ the node of $v$ which is closer to  $R$ than the other one.

Let us denote $b_i=\sqrt{l_{i}}$ for each link $i$, and $b_0=\sqrt{l_v}$. According to (\ref{E:adjacentb}) we have 
\begin{equation}
d(t_0,t_i)\leq qb_0b_{i}
\end{equation}
\begin{equation}
d(t_i,t_j)>qb_{i}b_{j},\mbox{ for }i,j=1,2,\dots,k,i\neq j,
\end{equation}
which means that we can apply Lemma~\ref{L:points} with points $t_0,t_1,\dots,t_k$, reals $b_0,b_1,\dots,b_k$ and $c_1=1, c_2=c_3=q$, getting
\[
|R|=k\leq C\left(4/q+1\right)^m+1,
\]
thus completing the proof. 
\end{proof}

For the case of the directed model we need the following lemma.
\begin{lemma}\label{L:directlem}
Consider the directed model. Let $I$ be an $r$-independent set of links in a doubling metric, and $v\notin I$ be a link, such that for each $w\in I$, $l_w\geq hl_v$, where  $r\geq 2$ and $h\geq 1$. Further, let $\min\{d_{vw},d_{wv}\}\leq r'\sqrt{l_vl_w}$, for $r'>0$. Then \[|I|\leq 2C\left(\frac{4r'}{h(r-1)^2}+1\right)^m+1.\] 
\end{lemma}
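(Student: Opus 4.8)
The plan is to reduce the estimate, group by group, to a single application of Lemma~\ref{L:points}, paying a factor of two for the asymmetry of the directed model. Since every $w\in I$ satisfies $\min\{d_{vw},d_{wv}\}\le r'\sqrt{l_vl_w}$, I would first split $I$ into $I_1=\{w\in I: d_{vw}\le r'\sqrt{l_vl_w}\}$, where the receiver $r_w$ is close to the sender $s_v$, and $I_2=I\setminus I_1$, where $d_{wv}\le r'\sqrt{l_vl_w}$, i.e. the sender $s_w$ is close to the receiver $r_v$. It then suffices to bound each of $|I_1|,|I_2|$ by $C(\frac{4r'}{h(r-1)^2}+1)^m+1$ and add; the leading factor $2$ in the claimed bound is exactly this split.

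For $I_1$ I would take $t_0=s_v$ and $t_w=r_w$ for $w\in I_1$; for $I_2$ I would instead take $t_0=r_v$ and $t_w=s_w$. In both cases set $b_0=\sqrt{l_v}$ and $b_w=\sqrt{l_w}$. Condition (2) of Lemma~\ref{L:points}, $d(t_0,t_w)\le c_2 b_0 b_w$, is then precisely the defining inequality of the group with $c_2=r'$: for $w\in I_1$ one has $d(s_v,r_w)=d_{vw}\le r'\sqrt{l_vl_w}$, and symmetrically for $I_2$, $d(r_v,s_w)=d_{wv}\le r'\sqrt{l_vl_w}$. Condition (1) holds because $l_w\ge hl_v$; in fact $b_w=\sqrt{l_w}\ge\sqrt{h}\,\sqrt{l_v}=\sqrt{h}\,b_0$, so $b_ib_j\ge h\,b_0^2$, and it is this lower bound on the products that produces the factor $h$ in the denominator of the final estimate. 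These two steps are routine.

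The heart of the proof, and the step I expect to be the main obstacle, is verifying condition (3), the pairwise separation $d(t_i,t_j)>(r-1)\sqrt{l_il_j}$ (so $c_3=r-1$), from the $r$-independence of $I$. The difficulty is that in the directed model $r$-independence is the \emph{product} condition $d_{ij}d_{ji}>r^2 l_il_j$ on two asymmetric distances, whereas $t_i,t_j$ are fixed nodes and $d(t_i,t_j)$ is a single symmetric distance. The key manoeuvre is the triangle inequality: for $w,w'\in I_1$ (say $l_w\le l_{w'}$) one has $d(r_w,r_{w'})\ge d_{ww'}-l_w$, by routing $s_w\to r_w\to r_{w'}$, and $d(r_w,r_{w'})\ge d_{w'w}-l_{w'}$; since $d_{ww'}d_{w'w}>r^2l_wl_{w'}$ forces $\max\{d_{ww'},d_{w'w}\}>r\sqrt{l_wl_{w'}}$, subtracting the appropriate link length and using $r\ge2$ with $l_w\le\sqrt{l_wl_{w'}}$ gives $d(r_w,r_{w'})>(r-1)\sqrt{l_wl_{w'}}$. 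The analogous computation with $s_w,s_{w'}$ handles $I_2$. The delicate point is that one must always subtract the length of the \emph{shorter} link, so the case in which the ``reverse'' directed distance $d_{w'w}$ dominates (forcing a subtraction against the longer length, or a separate bound on $d(s_w,s_{w'})$ rather than on $d(r_w,r_{w'})$) is exactly where care and the hypothesis $r\ge2$ are essential.

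With conditions (1)--(3) established, I would apply Lemma~\ref{L:points} separately to $\{t_0\}\cup\{t_w:w\in I_1\}$ and to $\{t_0\}\cup\{t_w:w\in I_2\}$, with $c_2=r'$, $c_3=r-1$, and the product bound $b_ib_j\ge h\,b_0^2$ in the role of condition~(1). Each application yields $C(\frac{4r'}{h(r-1)^2}+1)^m+1$, and summing the two bounds gives the stated estimate for $|I|$.
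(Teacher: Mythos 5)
Your overall architecture is exactly that of the paper's proof: split $I$ into two halves according to which of the two directed distances to $v$ is small, take $t_0\in\{s_v,r_v\}$ and the corresponding receiver (resp.\ sender) nodes as the points to be packed, derive the pairwise separation by the triangle inequality, and invoke Lemma~\ref{L:points} twice with $c_2=r'$ and $c_3=r-1$, paying the factor $2$ for the split. Your handling of condition (1) through the lower bound $b_ib_j\geq h\,b_0^2$ is also fine; that is what the proof of Lemma~\ref{L:points} actually uses.

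The gap is precisely in the step you single out as delicate, and it is not closed by ``care'' or by $r\geq 2$. Reading ``$r$-independent'' as the product condition $d_{ww'}d_{w'w}>r^2l_wl_{w'}$, you try to extract $d(r_w,r_{w'})>(r-1)\sqrt{l_wl_{w'}}$ from $\max\{d_{ww'},d_{w'w}\}>r\sqrt{l_wl_{w'}}$. When the maximum is attained by $d_{w'w}$ you must subtract the longer length $l_{w'}$, and the bound genuinely fails: place the links collinearly with $s_w=0$, $r_w=l_w$, $r_{w'}=l_w+D$, $s_{w'}=l_w+D+l_{w'}$; then $d_{ww'}=l_w+D$, $d_{w'w}=l_{w'}+D$, and the product exceeds $r^2l_wl_{w'}$ already for $D\approx(r^2-1)l_w$, which is far smaller than $(r-1)\sqrt{l_wl_{w'}}$ once $l_{w'}\gg l_w$. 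So under the product reading, condition (3) of Lemma~\ref{L:points} is simply false and the lemma's conclusion cannot be derived this way. The paper's proof instead uses the stronger hypothesis that \emph{both} $d_{ww'}>r\sqrt{l_wl_{w'}}$ \emph{and} $d_{w'w}>r\sqrt{l_wl_{w'}}$ hold (i.e., $r$-independence with respect to the mean power assignment), which is what every application of the lemma actually supplies (independent sets of $D_q(L)$, and sets $3$-independent w.r.t.\ mean power). With both inequalities available, for any pair of same-type nodes you can always choose the directed distance whose correction term is the \emph{shorter} link: for $l_w\leq l_{w'}$ one has $d(r_w,r_{w'})\geq d_{ww'}-l_w$ and $d(s_w,s_{w'})\geq d_{w'w}-l_w$, and both exceed $(r-1)\sqrt{l_wl_{w'}}$. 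You need to adopt this stronger hypothesis (or record that the lemma is only ever invoked under it); with that change your argument closes.
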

\begin{proof}
For simplicity of notation let us assume that $I=\{1,2,\dots,|I|\}$. Since for each different $u,w=1,2,\dots,|I|$, $u$ and $w$ are $r$-independent, then we have $d_{uw}>r\sqrt{l_{u}l_{w}}$ and $d_{wu}>r\sqrt{l_{u}l_{w}}$. Let us assume that $l_{u}\leq l_{w}$. Then using the triangle inequality we have $d(s_w,s_u)\geq d_{wu}-l_u>r\sqrt{l_ul_w}-l_u$,  and since $l_{u}\leq l_{w}$, we get
\begin{equation}\label{E:directlem1}
d(s_w,s_u)>(r-1)\sqrt{l_ul_w}.
\end{equation}
With a similar argument we get
\begin{equation}\label{E:directlem2}
d(r_w,r_u)>(r-1)\sqrt{l_ul_w}.
\end{equation}
From the condition of the lemma we have that for each $w\in I$, either $d_{vw}\leq r'\sqrt{l_vl_w}$ holds or  $d_{wv}\leq r'\sqrt{l_vl_w}$. Consider the node $t_0$ and the set of nodes $R$, which we define differently depending on the following two cases:

\emph{Case 1.} There is a subset $I_1\subseteq I$ with $|I_1|\geq |I|/2$, such that $d_{vw}\leq r'\sqrt{l_vl_w}$ for all $w\in I_1$. Then we take $t_0$ to be the sender node of $v$, i.e. $s_v$, and $R$ to be the set of receiver nodes of the links from $I_1$, i.e.  $R=\{r_w|w\in I_1\}$.

\emph{Case 2.} There is a subset $I_2\subseteq I$ with $|I_2|\geq |I|/2$, such that $d_{wv}\leq r'\sqrt{l_vl_w}$ for all $w\in I_2$. 
Then we take $t_0$ to be the receiver node of $v$, i.e. $r_v$, and $R$ to be the set of sender nodes of the links from $I_2$, i.e.  $R=\{s_w|w\in I_2\}$.

In both cases $|R|\geq |I|/2$, so next we bound $|R|$.

Consider the first case. Let $|R|=k$, and, without loss of generality, $R=\{r_1,r_2,\dots,r_k\}$. Then from the definition of $R$ and $t_0$ we have that $d(t_0,r_w)\leq r'\sqrt{l_vl_w}$ for $w=1,2,\dots,k$. On the other hand, from (\ref{E:directlem2}) we have $d(r_u,r_w)>(r-1)\sqrt{l_ul_w}$ for $u,w=1,2,\dots,k,u\neq w$. Then by denoting $b_0=\sqrt{l_v}$, $t_i=r_i$ and $b_i=\sqrt{l_i}$ for $i=1,2,\dots,k$, we have
\begin{equation}
d(t_0,t_i)\leq r'b_0b_{i}
\end{equation}
\begin{equation}
d(t_i,t_j)>(r-1)b_{i}b_{j},\mbox{ for }i,j=1,2,\dots,k,i\neq j,
\end{equation}
so we can apply Lemma~\ref{L:points} with points $t_0,t_1,\dots,t_k$, reals $b_0,b_1,\dots,b_k$ and $c_1=h, c_2=r', \displaystyle c_3=r-1$, getting
\[
|R|=k\leq C\left(\frac{4r'}{h(r-1)^2}+1\right)^m+1.
\]
For the second case the proof can be completed the same way, using (\ref{E:directlem1}).
\end{proof}

\begin{theorem}\label{T:strongd}
For $q\geq2$, the graph $D_q(L)$ is $O(1)$-strong.
\end{theorem}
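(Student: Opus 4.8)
The plan is to follow the template of the proof of Theorem~\ref{T:strongb}, replacing the direct use of Lemma~\ref{L:points} by an appeal to Lemma~\ref{L:directlem}, which was evidently tailored for exactly this purpose. To show $D_q(L)$ is $O(1)$-strong I must produce, for every induced subgraph $G'$, a vertex whose neighborhood induces no large independent set. First I would let $v$ be a link of \emph{minimum length} among the vertices of $G'$, and then bound the size of an arbitrary independent set $I$ of the neighborhood graph $N_{G'}(v)$.

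The choice of $v$ makes the three hypotheses of Lemma~\ref{L:directlem} fall out immediately. Since $I$ is independent in the induced subgraph $D_q(L)$, every two of its links are $q$-independent w.r.t.\ the mean power assignment, so $I$ is a $q$-independent set in the sense of the lemma. Since each $w \in I$ is a neighbor of $v$, the adjacency rule (\ref{E:adjacentd}) gives $\min\{d_{vw},d_{wv}\} \leq q\sqrt{l_v l_w}$ for all such $w$. And since $l_v$ is minimal in $G'$, we have $l_w \geq l_v$ for every $w \in I$.

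These are precisely the hypotheses of Lemma~\ref{L:directlem} with $r = q$, $r' = q$ and $h = 1$; note that the lemma's requirement $r \geq 2$ is exactly what dictates the assumption $q \geq 2$ in the theorem. The lemma then yields
\[
|I| \leq 2C\left(\frac{4q}{(q-1)^2}+1\right)^m + 1,
\]
a constant depending only on $q$ and the doubling dimension $m$. As $I$ and $G'$ were arbitrary, this proves that $D_q(L)$ is $t$-strong with $t = 2C(4q/(q-1)^2+1)^m + 1 = O(1)$.

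I expect no real obstacle, since all the geometric packing has already been carried out inside Lemma~\ref{L:directlem} — whose technical heart is the case split of $I$ into the half of links whose receivers are close to $s_v$ and the half whose senders are close to $r_v$, each reduced to Lemma~\ref{L:points}. The only point requiring genuine attention is the bookkeeping of independence notions: adjacency in $D_q(L)$ is defined through $q$-independence w.r.t.\ the \emph{mean power} (demanding both $d_{vw}$ and $d_{wv}$ to exceed $q\sqrt{l_v l_w}$), and one must confirm that this coincides with the $r$-independence used in Lemma~\ref{L:directlem} so that the parameters line up verbatim.
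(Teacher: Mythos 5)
Your proof is correct and follows essentially the same route as the paper's: take $v$ of minimum length, observe that an independent set $I$ in $N(v)$ is $q$-independent with $l_w\geq l_v$ and $\min\{d_{vw},d_{wv}\}\leq q\sqrt{l_vl_w}$ for all $w\in I$, and apply Lemma~\ref{L:directlem} with $h=1$, $r=r'=q$ to get the same bound $2C\left(4q/(q-1)^2+1\right)^m+1$. The bookkeeping point you flag (mean-power $q$-independence vs.\ the product-form $q$-independence named in Lemma~\ref{L:directlem}) is harmless here, since the mean-power notion is the stronger one and is exactly what the lemma's internal argument uses.
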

\begin{proof}
 Consider the vertex $v$ with $l_v$ being minimum over all links. Suppose that $I$ is a subset of vertices of $N(v)$, which is also an independent set in $N(v)$. We have that $I$ is a $q$-independent set of links, and for each $w\in I$, $l_w\geq l_v$ and $w$ and $v$ are adjacent, so according to (\ref{E:adjacentd}), $\min\{d_{vw},d_{wv}\}\leq q\sqrt{l_vl_w}$, so applying Lemma~\ref{L:directlem}, we get
$|I|\leq 2C\left(4q/(q-1)^2+1\right)^m+1=O(1).$
\end{proof}

\section{Scheduling Using the Mean Power Assignment}

Now let us go back to the problem of  PC-scheduling in a fading metric. Consider the following algorithm for scheduling $L$. 

\begin{algorithm}\caption{Scheduling arbitrary sets of links.}\label{A:schedule}
\begin{enumerate}
\item{Construct the graph $B_2(L)$ (respectively $D_3(L)$ for the directed model)}
\item{Applying the algorithm from Theorem~\ref{T:hochbaum} on the resulting graph, split $L$ into $2$-independent ($3$-independent) subsets $S_1,S_2,\dots,S_k$}
\item{For $i=1,2,\dots,k$ apply Algorithm~\ref{A:indep} to the set $S_i$, getting a schedule $\Sigma_i=\{S_i^1,S_i^2,\dots,S_i^{k_i}\}$}
\item{Output the schedule $\cup_i{\Sigma_i}$}
\end{enumerate}
\end{algorithm}

\begin{theorem}
For the bidirectional model of communication Algorithm~\ref{A:schedule} approximates PC-scheduling within a factor $O(\log{n})$ in fading metrics. 
\end{theorem}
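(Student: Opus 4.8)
The plan is to bound the length of the schedule produced by Algorithm~\ref{A:schedule} against the optimal PC-schedule length, which I will denote $OPT$, and to show it is $O(\log n)\cdot OPT$. The argument splits naturally into two halves: first I would bound the number $k$ of $2$-independent classes produced in step~2, and then bound the cost of scheduling each class in step~3 and sum.

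For the first half, I would relate $k$ to the chromatic number $\chi(B_2(L))$. A set of links is graph-theoretically independent in $B_2(L)$ exactly when it is $2$-independent, so a proper coloring of $B_2(L)$ is the same thing as a partition of $L$ into $2$-independent subsets. By Theorem~\ref{T:strongb} the graph $B_2(L)$ is $O(1)$-strong, hence by Theorem~\ref{T:coloringdstrong} Hochbaum's algorithm colors it with $k=O(\chi(B_2(L)))$ colors. It then remains to bound $\chi(B_2(L))$ by $O(OPT)$. I would take an optimal PC-schedule; under its optimal power assignment it is a $1$-signal schedule of length $OPT$. Using Theorem~\ref{T:robustness} I would refine it into a $2^\alpha$-signal schedule, increasing the number of slots only by the constant factor $\lceil 2\cdot 2^\alpha\rceil^2$ (recall $\alpha$ is a constant of the fading metric). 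By Lemma~\ref{L:signalslot} each resulting slot is then a $2$-independent set, so the refined schedule is a proper coloring of $B_2(L)$ using $O(OPT)$ colors; thus $\chi(B_2(L))=O(OPT)$ and $k=O(OPT)$.

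For the second half, each class $S_i$ is a $2$-independent set, so Theorem~\ref{T:lambda} (bidirectional case) tells me $S_i$ is $1$-bounded. Applying Theorem~\ref{T:scheduleindependent} with $q=2$ and $p=1$, Algorithm~\ref{A:indep} schedules each $S_i$ into $O(\log n)$ slots using the mean power assignment. Summing over the $k=O(OPT)$ classes yields a total of $O(OPT\cdot\log n)$ slots, which is precisely the claimed $O(\log n)$-approximation.

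The step I expect to be the main obstacle, and the one needing the most care, is the reduction $\chi(B_2(L))=O(OPT)$, because the optimal schedule uses an arbitrary optimal power assignment whereas $B_2(L)$ is defined through the mean power assignment. The key point to verify is that the conclusion of Lemma~\ref{L:signalslot} is power-assignment-free: from $a_w(v)<1/2^\alpha$ and $a_v(w)<1/2^\alpha$ the power ratios $P_w/P_v$ and $P_v/P_w$ cancel in the product, leaving the purely geometric condition $d_{vw}d_{wv}>4\,l_vl_w$, which in the bidirectional model ($d_{vw}=d_{wv}$) is exactly the edge-complement condition of $B_2(L)$. This is what lets a schedule optimal under arbitrary powers be read off as a coloring of the mean-power graph, and it is the hinge on which the whole approximation bound turns.
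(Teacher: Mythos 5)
Your proposal is correct and follows essentially the same route as the paper's proof: refine the optimal schedule to a $2^\alpha$-signal schedule via Theorem~\ref{T:robustness}, use Lemma~\ref{L:signalslot} (whose conclusion is indeed power-assignment-free, as you verify) to read it off as a coloring of $B_2(L)$ so that $k=O(OPT)$, and then combine Theorem~\ref{T:lambda} with Theorem~\ref{T:scheduleindependent} to schedule each class in $O(\log n)$ slots. You are in fact slightly more explicit than the paper, which leaves the appeal to Theorems~\ref{T:strongb} and~\ref{T:coloringdstrong} implicit in the claim $k=O(OPT)$.
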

\begin{proof}
According to Theorem~\ref{T:robustness}, for  a constant $q\geq 1$ an optimal $q^\alpha$-signal schedule is a constant factor approximation for an optimal SINR-feasible schedule. But from Lemma~\ref{L:signalslot} we know that \emph{each} $q^\alpha$-signal schedule induces a coloring of the graph $B_q(L)$, so the chromatic number of $B_q(L)$ is not more than the length of the optimal $q^\alpha$-signal schedule. So if we denote the length of an optimal SINR-feasible schedule by $OPT$, then on the second step of the algorithm we have $k=O(OPT)$. According to Theorem~\ref{T:lambda}, on the third step of the algorithm for all $i=1,2,\dots,k$ we have $k_i=O(\log{n})$, so the length of the resulting schedule on the fourth step is
$\sum_{i=1}^k{k_i}=O(\log{n}\cdot OPT)$
for the bidirectional model.
\end{proof}

We need the following theorem, to prove an approximation factor for the directed model.
\begin{theorem}\label{T:lambdam}
In the directed model each set of links, which is $3$-independent w.r.t. the mean power assignment, is $O(1)$-bounded.
\end{theorem}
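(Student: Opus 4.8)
The plan is to reduce the statement to a single application of Lemma~\ref{L:directlem}. Fix an arbitrary link $v\in L$; by the definition of $p$-boundedness it suffices to bound the size of the set
\[
W=\{w\in S : l_w\geq n^2 l_v \text{ and } v,w \text{ are } 1/(2n)\text{-close}\}.
\]
The first step is to rewrite the closeness condition geometrically. With the mean power assignment $P_v=cl_v^{\alpha/2}$ one computes $a_w(v)=\left(\sqrt{l_wl_v}/d_{wv}\right)^\alpha$ and, symmetrically, $a_v(w)=\left(\sqrt{l_wl_v}/d_{vw}\right)^\alpha$. Hence $\max\{a_v(w),a_w(v)\}\geq 1/(2n)$ is equivalent to $\min\{d_{vw},d_{wv}\}\leq (2n)^{1/\alpha}\sqrt{l_wl_v}$, so every $w\in W$ satisfies both $l_w\geq n^2 l_v$ and $\min\{d_{vw},d_{wv}\}\leq (2n)^{1/\alpha}\sqrt{l_vl_w}$.

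Next I observe that $W\subseteq S$, and since $S$ is $3$-independent w.r.t.\ the mean powers, $W$ is itself a $3$-independent set of links (that is, $d_{uw}>3\sqrt{l_ul_w}$ and $d_{wu}>3\sqrt{l_ul_w}$ for distinct $u,w\in W$). Moreover $v\notin W$, since $l_v\geq n^2 l_v$ is impossible for $n\geq 2$. This is exactly the configuration required by Lemma~\ref{L:directlem}, which I apply with $I=W$, independence parameter $r=3$, length-ratio lower bound $h=n^2$, and closeness parameter $r'=(2n)^{1/\alpha}$. This yields
\[
|W|\leq 2C\left(\frac{4(2n)^{1/\alpha}}{n^2(3-1)^2}+1\right)^m+1=2C\left(\frac{(2n)^{1/\alpha}}{n^2}+1\right)^m+1 .
\]

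Finally I check that this bound is a constant, which is where the fading assumption $\alpha>2$ enters: since $\alpha>2$, the ratio $(2n)^{1/\alpha}/n^2$ is at most $1$ for all $n\geq 1$ (and in fact tends to $0$), so $|W|\leq 2C\cdot 2^m+1=O(1)$. As $v$ was arbitrary, $S$ is $O(1)$-bounded. The only genuine obstacle here is bookkeeping rather than analysis: one must get the affectance-to-distance translation right and confirm that the large length gap $h=n^2$ mandated in the definition of $p$-boundedness is precisely what cancels the $(2n)^{1/\alpha}$ growth arising from the weak $1/(2n)$-closeness threshold, so that the $n$-dependence disappears and Lemma~\ref{L:directlem} returns a constant. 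All the real work — the triangle-inequality estimates and the doubling-dimension packing count — is already carried by Lemma~\ref{L:directlem} and, underneath it, Lemma~\ref{L:points}, so essentially nothing beyond this substitution remains.
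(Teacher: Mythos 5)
Your proof is correct and follows essentially the same route as the paper's: translate the $1/(2n)$-closeness condition under mean powers into $\min\{d_{vw},d_{wv}\}\leq(2n)^{1/\alpha}\sqrt{l_vl_w}$ and invoke Lemma~\ref{L:directlem} with $r=3$, $h=n^2$, $r'=(2n)^{1/\alpha}$, obtaining the identical bound $2C\bigl(2^{1/\alpha}/n^{2-1/\alpha}+1\bigr)^m+1=O(1)$. (One cosmetic remark: the constancy of this bound needs only $1/\alpha<2$, so the fading assumption $\alpha>2$ is not really what is "entering" here — it is the $h=n^2$ gap that kills the $n$-dependence, as you yourself observe.)
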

\begin{proof}
Suppose that in a fading metric $I$ is a 3-independent subset of links and $v$ is a link, such that for each $w\in I$ we have $l_w\geq n^2\cdot l_v$ and $\max\{a_v(w), a_w(v)\}\geq 1/(2n)$. Then we have for each $w\in I$, $\min\{d_{vw},d_{wv}\}\leq (2n)^{1/\alpha}\sqrt{l_vl_w}$, so by applying Lemma~\ref{L:directlem} we get that \[ I\leq 2C\left(\frac{2^{1/\alpha}}{n^{2-1/\alpha}}+1\right)^m+1=O(1).\]
\end{proof}

\begin{theorem}\label{T:meanapprox}
For the directed model of communication Algorithm~\ref{A:schedule} approximates scheduling problem with fixed mean power assignment within a factor $O(\log{n})$ in fading metrics.
\end{theorem}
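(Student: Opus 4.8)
The plan is to mirror the bidirectional argument of the preceding theorem, but to exploit that here the benchmark $OPT$ is the length of the optimal schedule \emph{using the mean power assignment} rather than the optimal-power PC-schedule. This difference is precisely what makes the directed case tractable: because the power is fixed to mean power, the signal slots of the optimal schedule will turn out to be $q$-independent \emph{with respect to the mean powers}, and hence genuine independent sets of $D_q(L)$, which is exactly what the coloring step needs.

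First I would set up the lower bound. Let $OPT$ be the length of an optimal mean-power schedule; this is a $1$-signal (SINR-feasible) schedule w.r.t.\ mean power. Using the refinement algorithm of Theorem~\ref{T:robustness}, convert it into a $3^\alpha$-signal mean-power schedule $\Sigma'=(S_1,\dots,S_m)$, paying only a constant factor, so $m=O(OPT)$. The crucial observation is that, applied with the mean power assignment, Lemma~\ref{L:signalslot} yields more than generic $3$-independence: a $3^\alpha$-signal slot w.r.t.\ mean power satisfies $a_w(v),a_v(w)<1/3^\alpha$ for every pair, which is exactly the definition of $3$-independence \emph{w.r.t.\ the mean powers}, i.e.\ $d_{vw}>3\sqrt{l_vl_w}$ and $d_{wv}>3\sqrt{l_vl_w}$. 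Hence each $S_i$ is an independent set of $D_3(L)$, so $\Sigma'$ is a proper coloring of $D_3(L)$ and $\chi(D_3(L))\le m=O(OPT)$.

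Next I would control the coloring produced by the algorithm and then bound the per-class cost. By Theorem~\ref{T:strongd} the graph $D_3(L)$ is $O(1)$-strong (as $3\ge 2$), so by Theorem~\ref{T:coloringdstrong} Hochbaum's algorithm colors it within a constant factor of optimal; thus Step~2 of Algorithm~\ref{A:schedule} produces $k=O(\chi(D_3(L)))=O(OPT)$ color classes $S_1,\dots,S_k$, each $3$-independent w.r.t.\ the mean powers. Each such $S_i$ satisfies $d_{vw}d_{wv}>9\,l_vl_w$, so it is (generically) $3$-independent in the sense required by Theorem~\ref{T:scheduleindependent}; and by Theorem~\ref{T:lambdam} each $S_i$ is $O(1)$-bounded. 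Theorem~\ref{T:scheduleindependent} therefore schedules $S_i$ into $O(1\cdot\log n)=O(\log n)$ slots w.r.t.\ mean power. Summing over the $k$ classes, the total length is $\sum_{i=1}^k k_i=O(\log n)\cdot O(OPT)=O(\log n\cdot OPT)$, giving the claimed approximation factor.

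I expect the only genuinely delicate point to be the sharpened reading of Lemma~\ref{L:signalslot} in the second paragraph: one must notice that fixing the power to mean power upgrades generic $q$-independence to \emph{mean-power} $q$-independence, which is what lets the optimal mean-power schedule color $D_3(L)$ directly. This is exactly the step that fails for directed PC-scheduling (where the optimal power is arbitrary and the slots are only generically $q$-independent, so one cannot bound $\chi(D_3(L))$ by the PC-optimum), and it is the reason the theorem is stated for the fixed mean power assignment.
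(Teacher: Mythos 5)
Your proposal is correct and follows essentially the same route as the paper's own proof: bound $\chi(D_3(L))$ by the optimal mean-power schedule length via the observation that mean-power signal slots are independent sets of $D_3(L)$, invoke the constant-factor coloring of the $O(1)$-strong graph, and then use Theorem~\ref{T:lambdam} with Theorem~\ref{T:scheduleindependent} to schedule each color class in $O(\log n)$ slots. Your explicit justification of why fixing the power to mean power upgrades generic $q$-independence to mean-power $q$-independence is exactly the step the paper dismisses as ``easy to see,'' and your reading of the benchmark as the optimal \emph{mean-power} schedule is the correct one (the paper's parenthetical ``w.r.t.\ the optimal power assignment'' appears to be a slip).
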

\begin{proof}
It is easy to see, that for $q\geq 1$, each $q^\alpha$-signal schedule, which uses the mean power assignment, induces a coloring of the graph $D_q(L)$, so the chromatic number of $D_q(L)$ is not more than the length of the optimal $q^\alpha$-signal schedule w.r.t. the mean power assignment, so if the optimal SINR-feasible schedule length (w.r.t. the optimal power assignment) is $OPTM$, then on the second step we have $k=O(OPTM)$. On the other hand, using Theorem~\ref{T:lambdam} we have that for $i=1,2,\dots k$, $k_i=O(\log{n})$, which implies that the resulting schedule length is $O(\log{n}\cdot OPTM)$.
\end{proof}

On the other hand, from Theorem~\ref{C:corollary1} we know that the mean power assignment approximates the problem of PC-scheduling within a factor of $O(\log{n}\log{\log{\Lambda}})$, hence using the algorithm Schedule to solve the PC-scheduling problem in the directed model, and taking into account Theorem~\ref{T:meanapprox}, we get a $O(\log^2{n}\log{\log{\Lambda}})$-approximation.

\begin{corollary}
For the directed model of communication Algorithm~\ref{A:schedule} approximates PC-scheduling within a factor $O(\log^2{n}\log{\log{\Lambda}})$ in fading metrics.
\end{corollary}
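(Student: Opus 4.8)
The plan is to obtain the bound by composing the two approximation guarantees already in hand: that the mean power assignment loses only a factor $O(\log{n}\log{\log{\Lambda}})$ against the optimal power-control schedule (Theorem~\ref{C:corollary1}), and that Algorithm~\ref{A:schedule} schedules with \emph{fixed} mean power within a factor $O(\log{n})$ of the best mean-power schedule (Theorem~\ref{T:meanapprox}). Since Algorithm~\ref{A:schedule} always uses the mean power assignment, its output is the same object in both situations; only the benchmark against which we compare it changes. The one piece of real work is to set up the correct intermediate benchmark and check that the two losses multiply.

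First I would fix notation. Let $OPT$ denote the length of an optimal PC-schedule, i.e.\ the minimum number of slots over \emph{all} power assignments, and let $OPTM$ denote the length of an optimal SINR-feasible schedule that is restricted to use the mean power assignment. The schedule produced in the proof of Theorem~\ref{C:corollary1} is itself a valid mean-power schedule of length $O(\log{n}\log{\log{\Lambda}})\cdot OPT$, so the optimum over mean-power schedules can only be better, which gives
\[
OPTM = O(\log{n}\log{\log{\Lambda}})\cdot OPT.
\]
Next I would invoke Theorem~\ref{T:meanapprox}: run in the directed model, Algorithm~\ref{A:schedule} constructs $D_3(L)$, colors it so that each color class is a $3$-independent set (its chromatic number being $O(OPTM)$ by Lemma~\ref{L:signalslot} together with Theorem~\ref{T:robustness}), and schedules each class into $O(\log{n})$ slots via Theorem~\ref{T:lambdam} and Theorem~\ref{T:scheduleindependent}. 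Hence the output has length $O(\log{n})\cdot OPTM$. Substituting the bound on $OPTM$ yields a schedule of length $O(\log{n})\cdot O(\log{n}\log{\log{\Lambda}})\cdot OPT = O(\log^2{n}\log{\log{\Lambda}})\cdot OPT$, as claimed.

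The composition is elementary, so there is no combinatorial obstacle here; the only point requiring care is conceptual. One must be sure that the benchmark in Theorem~\ref{T:meanapprox} is the optimal \emph{mean-power} schedule $OPTM$ and not the global optimum $OPT$, and that the factor from Theorem~\ref{C:corollary1} relating $OPTM$ to $OPT$ is applied \emph{on top of} it rather than instead of it; conflating the two benchmarks would mistakenly report $O(\log{n}\log{\log{\Lambda}})$ instead of the correct $O(\log^2{n}\log{\log{\Lambda}})$. Once the two guarantees are chained through the intermediate quantity $OPTM$, the stated factor follows at once.
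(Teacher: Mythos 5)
Your proposal is correct and follows essentially the same route as the paper: the corollary is obtained by chaining Theorem~\ref{C:corollary1} (which bounds the optimal mean-power schedule length $OPTM$ by $O(\log{n}\log{\log{\Lambda}})\cdot OPT$) with Theorem~\ref{T:meanapprox} (which bounds the algorithm's output by $O(\log{n})\cdot OPTM$). Your explicit warning about not conflating the two benchmarks is a sound reading of what the paper does implicitly.
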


\section{Introducing the noise factor}
All the results we derived are for the case when there is no ambient noise factor in SINR formula. To see how much is the impact   of introducing the noise factor into the formula on the schedule length, first let us notice that if there is a noise $N$, then for each link, which is scheduled in a SINR-feasible set w.r.t. any power assignment $P$, the following must hold:
\begin{equation*}
P_v/l_v^\alpha \geq \beta N\mbox{, for each link }v.
\end{equation*}
This is the minimum power needed to deliver a message to the receiver of $v$ even if there are no other transmissions. We assume a little stronger constraint on the power assignment, i.e.
\begin{equation}\label{E:noise}
P_v/l_v^\alpha \geq 2\beta N\mbox{, for each link }v.
\end{equation}
With this assumption we can include the noise factor into SINR formula by changing our results only by a constant factor. Here is how to do it. 
If there is a set $S$, which is SINR-feasible w.r.t. power assignment $\{P_v\}$ and without noise factor and $\beta'=2\beta$, then for each $v\in S$ we have $P_v/l_v^\alpha>2\beta \sum_{w\in S\setminus v}{P_w/d_{wv}^\alpha}$. Then using (\ref{E:noise}), we get  $P_v/l_v^\alpha>\beta \sum_{w\in S\setminus v}{P_w/d_{wv}^\alpha}+\beta N$, which, taking into account Theorem~\ref{T:robustness}, proves the following theorem.
\begin{theorem}
 If (\ref{E:noise}) holds, then each zero-noise schedule of length $T$ can be transformed into a non-zero-noise schedule of length  $O(T)$.
 \end{theorem}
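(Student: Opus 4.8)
The plan is to follow the reduction outlined just before the statement: a \emph{zero-noise schedule} is, by the conventions of this paper, a $1$-signal schedule (it is SINR-feasible with $N=0$, $\beta=1$, and strict inequality), and the goal is to recover feasibility for the true threshold $\beta$ and the true noise $N$ while losing only a constant factor in the number of slots. The engine is Theorem~\ref{T:robustness}, which lets me tighten the signal quality of an existing schedule at the cost of a bounded blow-up in length.

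First I would apply Theorem~\ref{T:robustness} to the given $1$-signal schedule with target $p'=2\beta$. Since $p=1$ and $\beta$ is a fixed constant, the number of slots grows by at most a factor $\lceil 4\beta\rceil^2 = O(1)$, so the resulting $2\beta$-signal schedule still has length $O(T)$. Unfolding the definition of the $2\beta$-signal property, in every slot $S$ and for every link $v\in S$ we have $a_S(v) < 1/(2\beta)$, which is exactly the inequality $P_v/l_v^\alpha > 2\beta\sum_{w\in S\setminus\{v\}} P_w/d_{wv}^\alpha$ recorded in the paragraph preceding the theorem.

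The second step is to verify that each such slot is in fact SINR-feasible once the noise term $N$ is reintroduced and the threshold is the true $\beta$. The idea is to split the available signal $P_v/l_v^\alpha$ into two halves and charge one half to the interference and the other to the noise. From the $2\beta$-signal property, one half dominates the interference, $\tfrac{1}{2}P_v/l_v^\alpha > \beta\sum_{w\in S\setminus\{v\}} P_w/d_{wv}^\alpha$, while from hypothesis (\ref{E:noise}) the other half dominates the noise, $\tfrac{1}{2}P_v/l_v^\alpha \ge \beta N$. Adding these two inequalities gives $P_v/l_v^\alpha > \beta\big(\sum_{w\in S\setminus\{v\}} P_w/d_{wv}^\alpha + N\big)$, which is precisely constraint (\ref{E:sinr}) with noise $N$ and threshold $\beta$. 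Hence every slot of the refined schedule is SINR-feasible in the presence of noise, and its total length is $O(T)$.

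I do not expect a genuine obstacle here; the only points requiring care are bookkeeping. One must ensure that the factor from Theorem~\ref{T:robustness} is really $O(1)$, which hinges on $\beta$ being a fixed constant rather than part of the input; and one must keep the factor $2$ appearing in (\ref{E:noise}) matched to the choice $p'=2\beta$ in the refinement step, so that the two half-signal inequalities close exactly. The whole argument is in essence the one sketched in the preceding paragraph, made fully explicit.
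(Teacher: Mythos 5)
Your proposal is correct and follows essentially the same route as the paper: refine the zero-noise schedule to a $2\beta$-signal schedule via Theorem~\ref{T:robustness} (a constant-factor blow-up since $\beta$ is a fixed constant), then split $P_v/l_v^\alpha$ into two halves, one dominating the interference and the other dominating the noise via (\ref{E:noise}). Nothing further is needed.
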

 It follows that if (\ref{E:noise}) holds, then our results hold with any non-zero noise factor as well.

\section*{Acknowledgment}
Author thanks Prof. M.M. Halld\'{o}rsson for helpful discussions.

\end{document}